\newtheorem{lemma}{Lemma}
\newtheorem{proof}{Proof}
\newtheorem{theorem}{Theorem}
\begin{document}
%
% paper title
%\title{\LARGE{Stability Guarantee of Finite-Dimensional Koopman Model using Dual Loop Control: A Framework}}
\title{\LARGE{\textbf{Dual-Loop Robust Control of Biased Koopman Operator Model by Noisy Data of Nonlinear Systems}}} % UPDATE THE TITLE: TOO LONG
% \title{\LARGE{\textbf{Koopman Operator based Robust Control (KORC): A Data-Driven Dual-Loop Control for Nonlinear Systems}}} % UPDATE THE TITLE: TOO LONG

\author{Tianyi He, Anuj Pal
% <-this % stops a space
\thanks{Tianyi He is with the Department of Mechanical and Aerospace Engineering, Utah State University, Logan, Utah 84322-4130, USA. Email: tianyi.he@usu.edu. He is the corresponding author.}
\thanks{Anuj Pal was with the Department
	of Mechanical Engineering, Michigan State University, East Lansing,
	MI, 48824 USA. E-mail: palanuj@msu.edu.}
}
% <-this % stops a space
%\thanks{Manuscript received April 19, 2005; revised August 26, 2015.}}

% note the % following the last \IEEEmembership and also \thanks - 
% these prevent an unwanted space from occurring between the last author name
% and the end of the author line. i.e., if you had this:
% 
% \author{....lastname \thanks{...} \thanks{...} }
%                     ^------------^------------^----Do not want these spaces!
%
% a space would be appended to the last name and could cause every name on that
% line to be shifted left slightly. This is one of those "LaTeX things". For
% instance, "\textbf{A} \textbf{B}" will typeset as "A B" not "AB". To get
% "AB" then you have to do: "\textbf{A}\textbf{B}"
% \thanks is no different in this regard, so shield the last } of each \thanks
% that ends a line with a % and do not let a space in before the next \thanks.
% Spaces after \IEEEmembership other than the last one are OK (and needed) as
% you are supposed to have spaces between the names. For what it is worth,
% this is a minor point as most people would not even notice if the said evil
% space somehow managed to creep in.

% The paper headers
\markboth{Journal of \LaTeX\ Class Files,~Vol.~14, No.~8, August~2015}%
{Shell \MakeLowercase{\textit{et al.}}: Bare Demo of IEEEtran.cls for IEEE Journals}
% The only time the second header will appear is for the odd numbered pages
% after the title page when using the twoside option.
% 
% *** Note that you probably will NOT want to include the author's ***
% *** name in the headers of peer review papers.                   ***
% You can use \ifCLASSOPTIONpeerreview for conditional compilation here if
% you desire.

% If you want to put a publisher's ID mark on the page you can do it like
% this:
%\IEEEpubid{0000--0000/00\$00.00~\copyright~2015 IEEE}
% Remember, if you use this you must call \IEEEpubidadjcol in the second
% column for its text to clear the IEEEpubid mark.

% use for special paper notices
%\IEEEspecialpapernotice{(Invited Paper)}

% make the title area
\maketitle

% As a general rule, do not put math, special symbols or citations
% in the abstract or keywords.
\begin{abstract}
The Koopman operator approach for data-driven control design of a nonlinear system is on the rise because of its capability to capture the behaviours of global dynamics. However, the measurement noises of inputs and outputs will bias the Koopman model identification and cause model mismatch from the actual nonlinear dynamics. The current work evaluates the bounds of the noise-induced model bias of the Koopman operator model and proposes a data-driven robust dual-loop control framework (\textit{Koopman based robust control}-KROC) for the biased model. First, the model mismatch is found bounded under radial basis functions (RBF) and the bounded noises, and the bound of model mismatch is assessed. Second, the pitfalls of linear quadratic Gaussian (LQG) control based on the biased Koopman model of Van Der Pol oscillator are shown. Motivated from the pitfalls, the dual-loop control is proposed, which consist of an observer-based state-feedback control based on the nominal Koopman model and an additional robust loop to compensate model mismatch. A linear matrix inequality (LMI) is derived, which can guarantee robust stability and performance under bounded noises for the finite-dimensional Koopman operator model. Finally, the proposed framework is implemented to a nonlinear Van Der Pol oscillator to demonstrate enhanced control performance by the dual-loop robust control.

% In addition, the robustness of the Koopman operator is compared with a linearized model at one equilibrium point, and the Koopman model with the dual-loop control can achieve guaranteed robustness in a larger area of operating points than the traditional linearized model.
\end{abstract}

% removed the initial conditions, because we don't show the results in that. 

%%%%%%%%%%%%%%%%%%%%%%%%%%%%%%%%%%%%%%%%%%%%%%%%%%%%%%%%%%%%%%%%%%%%%%%%%%%%%%%%
\section{Introduction}

Nonlinear systems pose challenges in control theory, because their complex and nonlinear behaviours make the design of stabilizing controllers a non-trivial task~\cite{Khalilnonlinear,slotine1991applied}. Traditional model-based nonlinear control requires an accurate model using the first-principle-based approaches. To alleviate the modeling efforts, data-driven approaches to perform system identification and control design are on the rise \cite{hou2013model}.

% this part is not directly related to Koopman, commented out. 
%  A framework of direct data-driven control based on Fundamental Lemma was proposed in \cite{de2019formulas}, which later got extended to noisy data case in \cite{berberich2022combining}. However, the control design was limited for linear time-invariant (LTI) systems, with a few works recently extended to handle data-driven control design for nonlinear systems \cite{bisoffi2020data, dai2020semi, strasser2021data}. 

The Koopman operator theory~\cite{koopman1931hamiltonian} has emerged as a data-driven approach to establish models and design controller for nonlinear systems using data. The Koopman operator theory involves lifting the original nonlinear dynamics to an infinite-dimensional observable space to identify its linear representation ~\cite{brunton2021modern,mauroy2020koopman,bevanda2021koopman}. After lifting, well-established linear control theories can be used to design linear controller to achieve desired performance. Compared to nonlinear control algorithms that involve solving complex nonlinear optimization problems, linear control algorithms are more efficient for real-time implementation. As a result, Koopman model-based control has gained significant attentions in various engineering applications~\cite{bruder2020data,mamakoukas2021derivative}. This approach is particularly useful in scenarios where the nonlinear system is too complex to derive an analytical model, as the Koopman operator framework provides a powerful means of data-driven linear representation. With the availability of a theoretical framework for analyzing system dynamics spectra and the dynamic mode decomposition (DMD) numerical algorithm for learning system dynamics from data~\cite{proctor2016dynamic,kutz2016dynamic}, this method avoids complex modeling procedures and facilitates efficient data-driven control design for highly nonlinear systems.

However, when applying DMD in data-driven modeling and using finite-dimensional representations for nonlinear dynamics, the mismatch between the approximate and actual models is inevitable. The mismatch comes from the availability and quality of data sets, sensor noises in measurements, and selections of basis functions. The plant-model mismatch inherently degrades the model-based control performance. In the worst case, if the Koopman model fails to include potentially excited modes, the control and observation spillovers will destabilize the closed-loop system~\cite{hagen2003spillover,balas1982toward}. The control inputs can excite the modes that are not in the space spanned by the chosen basis functions (control spillover). The measurements can contain modes that are not in the basis function space, and the control action will be corrupted (observation spillover). Therefore, the guaranteed robustness against model uncertainty and mismatch is a critical issue in the Koopman model-based control approach. 

% However, when applying DMD in data-driven modeling and using finite-dimensional representations for nonlinear dynamics, the model mismatch between the approximate model and the actual model is inevitable. The mismatch comes from the availability and quality of data sets, sensor noises in measurements, and selections of basis functions. Because external disturbances are neither known, nor be extracted or estimated in DMD, the resulted Koopman model is inevitably biased. The plant-model mismatch inherently degrades the model-based control performance. In  the worst case, if the Koopman model fails to include potentially excited modes, the control and observation spillovers will destabilize the closed-loop system~\cite{hagen2003spillover,balas1982toward}. The control inputs can excite the modes that are not in the space spanned by the chosen basis functions (control spillover). The measurements can contain the modes that are not in the basis function space, and the control action will be corrupted (observation spillover). Therefore, the guaranteed robustness against model uncertainty and mismatch is a critical issue in the Koopman-model-based control approach. 

% add a paragragh on the works of DMD/Koopman with noisy data 
It has been recognized that the noisy data can lead to biased models obtained by the DMD approach, because the DMD modes and magnitudes are perturbed by the noisy data. The bias caused by sensor noise was analyzed and quantified how bias depend on the size and level of the data in~\cite{dawson2016characterizing,hemati2017biasing}. De-biasing modifications to the DMD were further proposed to remove this bias under the assumptions of small noises. If noise property is known, then a direct correction can be applied. Otherwise, results from DMD forwards and backwards can be combined to remove biases. Another alternative is to apply total least-squares minimization algorithm to minimize the model bias. It was pointed out in~\cite{haseli2019approximating} that the noisy data through observable functions will not remain identically distributed. Therefore, an element-wise weighted total least squares (EWTLS) was proposed to find consistent approximation. However, the proposed noise-resilient method leads to a non-convex optimization. Ref.~\cite{wanner2022robust} investigated the stochastic Koopman operator for random dynamical systems where either the dynamics or observables are affected by noise, and a robust DMD algorithm was proposed to address the noise-induced bias.  

The above-mentioned studies are robust modeling techniques against noisy data. This paper provides a robust control perspective to address the noise-induced model bias. First, we will examine the bound of model bias under bounded noises, and present a pitfall case that the LQG controller based on biased model will fail to stabilize the nonlinear system. After that, we present a dual-loop control framework to bring in the robustness. The dual-loop robust control resembles an internal model control and consists of two loops for nominal control and robust control, respectively~\cite{zhou2001new,xiangchen_CDC,two_step,TIE_dualloop}. The nominal controller is designed based on the Koopman model, and it can be any stabilizing controller in the standard observer-based state-feedback form. The robust loop is placed with an $H_{\infty}$ controller. Thanks to the performance separation principle proved in~\cite{TIE_dualloop,xiangchen_CDC}, the nominal performance is independent of the robust controller matrices, and the robust performance is independent of the state-feedback and observer gains of the nominal controller~\cite{TIE_dualloop}. In other words,  the nominal control loop and robust control loop can be separately designed and coordinated between high performance and robustness against the model mismatch. The dual-loop control scheme can attenuate unmatched uncertainty/disturbance in $H_{\infty}$ manner to achieve a guaranteed robustness level. Different from the disturbance-observer-based control (DOBC)~\cite{DOBC_review}, the proposed dual-loop control does not estimate the model mismatch but relies on the $H_{\infty}$ control to improve the robustness.

The main contributions of the current work are summarized below:    
\begin{enumerate}
    \item Demonstrate the plant-model mismatch in the Koopman-identified model learned from the polluted data set and a pitfall of a Koopman-model-based optimal controller (LQG control) addressing nonlinear dynamics.
    \item Quantify the bound of the model mismatch from Koopman identification on the bounded noisy data. 
    \item Incorporate a dual-loop control scheme to the data-driven finite-dimensional Koopman model to compensate for the plant-model mismatch;  Give out the implementation of the data-driven Koopman identification and the dual-loop control to the nonlinear system; The dual-loop scheme is extended from our previous work~\cite{TIE_dualloop} in LTI systems to nonlinear systems. 
    \item Demonstrate the improvement in robustness by the dual-loop control applied on the Koopman model using an example of Van Der Pol oscillator. In high measurement noise levels, the nominal optimal controller designed from the Koopman model fails to regulate the nonlinear system, whereas the dual-loop scheme improves the robustness against the model mismatch.
\end{enumerate}
 
The rest of this paper is organized as follows. Section~\ref{section2} starts the discussion with the preliminaries of the Koopman operator theory for data-driven modeling. Section~\ref {section3} investigate the extended dynamic model decomposition (EDMD) to data with noise and evaluate the upper bounds of model mismatch. Section \ref{section4} presents the dual-loop robust control architecture with theoretical guarantees for system stability and robustness. Section~\ref{section5} presents an illustrative example of the Van Der Pol oscillator to demonstrate the advantages of the proposed control scheme, followed by conclusions in Section \ref{section6}. The following mathematical notations are used. $||\cdot||_{F}$ denotes the Frobenius norm, and $||T_{zw}||_{\infty}$ denotes the $H_{\infty}$ norm of the transfer function $T_{zw}\in RH_{\infty}$. ${\dagger}$ denotes the pseudo-inverse.

\section{Preliminaries}\label{section2}

\subsection{Koopman Operator Theory}
Consider a discrete-time nonlinear dynamical system 
\begin{equation}
    x_{t+1} = f(x_t).
\end{equation}
where $x_{t+1} \in \mathcal{X} \subset \mathrm{R}^n$ is the state at time step $t+1$ which evolves in time by non-linear state transition map $f: \mathcal{X} \rightarrow \mathcal{X}$. Given any {observable} $\psi \in \mathcal{F}$, the Koopman operator $\mathcal{K}: \mathcal{F} \rightarrow \mathcal{F}$ is defined as 
\begin{equation}
     \mathcal{K}\psi(x_t) = \psi(f(x_t)) = \psi(x_{t+1})
\end{equation}

Here, the Koopman operator $\mathcal{K}$ linearly evolves a nonlinear dynamics using observables $\psi$, which lift the finite-dimensional states $x_{t} \in \mathrm{R}^n$ to an infinite-dimensional observable space where original nonlinear dynamics can be represented by an infinite-dimensional linear system. 

% The Koopman operator $\mathcal{K}$ then evolves the 'linear' dynamics in observable space.  

% Here, $\psi$ is known as observables, which lift the finite-dimensional states $x_{t} \in \mathrm{R}^m$ to an infinite-dimensional observable space where original nonlinear dynamics can be represented by an infinite-dimensional linear system. The Koopman operator $\mathcal{K}$ then evolves the 'linear' dynamics in observable space. 

Because the infinite-dimensional observables are practically infeasible to implement the theoretical transformations, an approximation-based approach is usually implemented to get a finite-dimensional approximation of the Koopman operator. This work uses extended dynamic mode decomposition (EDMD) \cite{williams2015data} with controls to get approximate transition matrices in observable space. 

\subsection{Extended Dynamic Mode Decomposition (EDMD)}
Consider the following discrete-time nonlinear dynamics,
\begin{equation}
\begin{aligned}
    x_{t+1}&= ~f(x_t,u_t), \\
     y_{t}&= ~g(x_t),
\end{aligned}
\label{eqn:NLdynamics}
\end{equation}
where $x_{t}$ is the state, $y_{t}$ is the output, $u_{t}$ is control inputs, and $f$, $g$ are the nonlinear functions. Using Koopman operator theory, the finite-dimensional Koopman model of the above nonlinear dynamics can be written as:
\begin{equation}
\begin{aligned}
    z_{t+1}&= A_pz_t + B_pu_t, \\
     y_{t}&= C_pz_t,
\end{aligned}
\label{eqn:NLdynamics}
\end{equation}
where, $z_{t} = \psi(x_t)$ is the lifted state in the observable space. The goal here is to estimate the unknown system transition matrices $A_p$, $B_p$, and $C_p$ using EDMD. Let us assume that the data snapshots of control trajectory $\mathbf{U}$, state trajectory $\mathbf{X}$ and output trajectory $\mathbf{Y}$ are available from the original nonlinear system \eqref{eqn:NLdynamics} with $N$ points. The entire data snapshot can then be written in matrix form as
\begin{equation}
\begin{array}{c}
     \textbf{X}^1 = [x_0,~x_1, ~\hdots~,~x_{N-1}], \quad \textbf{X}^2 = [x_1,~x_2, ~\hdots~,~x_{N}]\\
     \textbf{U} = [u_1,~u_2, ~\hdots~,~u_{N}], \quad 
     \textbf{Y} = [y_1,~y_2, ~\hdots~,~y_{N}]
\end{array}
\label{eqn:SplittingData}
\end{equation} 
where, where state data-matrix $\textbf{X}^2$ is shifted from the matrix $\textbf{X}^1$ by one-time forward step. The EDMD uses a set of observables to lift the nonlinear dynamics from their original state space to a high-dimensional observable space. Using the basis function vector $\psi(\cdot)$, the following high-dimensional matrix of lifted states can be obtained.
\begin{equation}
\begin{array}{c}
   \mathbf{X}_{\mathrm{lift}} = [\psi(x_0), ~\hdots~, \psi(x_{N})] \\  
   %\textbf{Y}_{lift} = [\psi(y_1), ~\hdots~, \psi(y_{N})]
\end{array}
    \label{eqn:LiftedMatrices}
\end{equation}
and the observables of $x_{t}$ is written as
\begin{equation}
\psi(x_{t}) = 
    \begin{bmatrix}
     \psi_1(x_{t}),
     \cdots,
     \psi_M(x_{t})
    \end{bmatrix}^{T}
\end{equation}
Here, $M$ is the dimension of the basis function vector, which is equivalent to the dimension of the lifted system in the observable space. 

Splitting the state matrix from \eqref{eqn:LiftedMatrices}, two data matrices $\textbf{X}^1_{\mathrm{lift}}$ and $\textbf{X}^2_{\mathrm{lift}}$ can be obtained as follows:
\begin{equation}
    \mathbf{X}_{\mathrm{lift}}^{1} = [\psi(x_0), ~\hdots~, \psi(x_{N-1})], \quad \mathbf{X}_{\mathrm{lift}}^{2} = [\psi(x_1), ~\hdots~, \psi(x_{N})].
    \label{eqn:LiftedMatrix_split}
\end{equation}

The control input matrix $\textbf{U}$ and output \textbf{Y} are not lifted to keep the input-output mapping. The approximate system matrices can be obtained by solving the least square regression problem \cite{korda2018linear}.

\begin{equation}
    \min\limits_{A_{p},B_p} ~ ||\mathbf{X}_{\mathrm{lift}}^{2} - A_{p}\mathbf{X}_{\mathrm{lift}}^{1} - B_{p}\textbf{U}||_F
    \label{eqn_LS_AB}
\end{equation}
\begin{equation}
    \min\limits_{C_{p}} ~ ||\textbf{Y} - C_{p}\mathbf{X}_{\mathrm{lift}}^{1}||_F
     \label{eqn_LS_CD}
\end{equation}

It is equivalently interpreted as linear system identification using input-state-output data.  The optimal solution of \eqref{eqn_LS_AB} will give the state-space matrices $A_{p}$ and $B_{p}$. Similarly, \eqref{eqn_LS_CD} will provide matrix $C_p$ for the output. The optimal solutions of least regression can be analytically expressed as:
\begin{equation}
    [A_{p},B_{p}] = \mathbf{X}_{\mathrm{lift}}^{2}\left[\mathbf{X}_{\mathrm{lift}}^{1};\mathbf{U}\right]^{\dagger}
\end{equation}
\begin{equation}
    C_{p} = \mathbf{Y}\left[\mathbf{X}_{\mathrm{lift}}^{1} \right]^{\dagger}
\end{equation}

Note that, in the absence of measurement noises, the model matrices $(A_{p}, B_{p}, C_{p})$ can render accurate approximations of input-output mapping by the high-dimensional linear system in observables space with small \textit{approximation error} and regression error. However, when measurement noises are present, the noise will perturb data snapshots, lifted/filtered by basis function $\psi(\cdot)$, and eventually render model mismatch with biased modes and magnitudes. The next section will investigate the model mismatch induced by noises, and quantify the bound of the model mismatch.

\section{EDMD on noisy data}\label{section3}
For the current work, we assume that the noisy measurement of state is $\hat{x}_{t} = x_{t} + n_{t}$ with small perturbations $n_{t}$, while the measurement errors of control inputs are not considered. The Koopman operator $\psi(\cdot)$ operates on the data of the perturbed states, rendering the perturbed lifted data-matrix $\hat{\mathbf{X}}_{\mathrm{lift}}$. Based on the corrupted measurement data, the expressions for the lifted state matrix can be written as: 
% \hl{$x_{t}$ is independent from $n_{t}$}.
\begin{align}
    \hat{\mathbf{X}}^1_{\mathrm{lift}} = [\psi(x_0 + n_{0}), ~\hdots~, \psi(x_{N-1} + n_{N-1})] \\  \hat{\mathbf{X}}^2_{\mathrm{lift}} = [\psi(x_1 + n_{1}), ~\hdots~, \psi(x_{N} + n_{N})] \\
    \hat{\mathbf{Y}} = [y_1 + m_{1},~y_2+m_{2}, ~\hdots~,~y_{N} + m_{N}].
    \label{eqn:LiftedMatrix_split}
\end{align}

Thus, the perturbed least-square regression for Koopman model identification produces $(A, B_{2}, C_{2})$, which will be used for controller design. Note that, $(A, B_{2})$ are the perturbed transition matrices from the nominal matrices $(A_{p}, B_{p}, C_{2})$. The least-square problem formulation can be written as

\begin{equation}
    \min\limits_{{A},{B}_2} ~ ||\hat{\mathbf{X}}^2_{\mathrm{lift}} ~-~ {A}\hat{\mathbf{X}}^1_{\mathrm{lift}} ~-~ {B}_2\textbf{U}||_F
    \label{pertub_Koopman_AB}
\end{equation}
\begin{equation}
    \min\limits_{{C}_{2}} ~ ||\hat{\mathbf{Y}} ~-~ {C}_{2}\mathbf{X}^1_{\mathrm{lift}}||_F
    \label{pertub_Koopman_CD}
\end{equation}

Assuming the selected basis function is continuously differentiable (for instance, polynomial, Gaussian, radial basis function kernel), then, using Taylor expansion, the observables with perturbed data can be approximated by its first-order equation as

\begin{equation}
    \psi(x_{k}+n_{k}) \approx \psi(x_{k}) + \frac{\partial \psi}{ \partial x_{k}} n_{k}
\end{equation}
The overall lifted perturbed data snapshot can be approximated as 

\begin{align}
    \hat{\mathbf{X}}^1_{\mathrm{lift}} = & [\psi(x_0)+\frac{\partial \psi}{ \partial x_{0}}n_{0}, ~\hdots~, \psi(x_{N-1})+\frac{\partial \psi}{ \partial x_{N-1}}n_{N-1}] \\  
    = & \mathbf{X}^1_{\mathrm{lift}} + [\frac{\partial \psi}{ \partial x_{0}}n_{0}, ~\hdots~, \frac{\partial \psi}{ \partial x_{N-1}}n_{N-1}] = \mathbf{X}^1_{\mathrm{lift}} + E_{1}\\
    \hat{\mathbf{X}}^2_{\mathrm{lift}} = & [\psi(x_1)+\frac{\partial \psi}{ \partial x_{1}}n_{1}, ~\hdots~, \psi(x_{N})+\frac{\partial \psi}{ \partial x_{N}}n_{N}] \\ 
      = & \mathbf{X}^2_{\mathrm{lift}} + [\frac{\partial \psi}{ \partial x_{1}}n_{1}, ~\hdots~, \frac{\partial \psi}{ \partial x_{N}}n_{N}] = \mathbf{X}^2_{\mathrm{lift}} + E_{2}
    \label{eqn:LiftedMatrix_split}
\end{align}

Therefore, the nominal model identification can be derived in terms of perturbed data measurements as 
\begin{equation}\label{pertub_Koopman}
    [{A}_p,{B}_p] = {\mathbf{X}}^2_{\mathrm{lift}}\left[{\mathbf{X}}^1_{\mathrm{lift}}; \mathbf{U}\right]^{\dagger} = (\hat{\mathbf{X}}^2_{\mathrm{lift}}-E_{2})\left[(\hat{\mathbf{X}}^1_{\mathrm{lift}}-E_{1}); \mathbf{U}\right]^{\dagger}
\end{equation}
where, $E_1$ and $E_2$ are the additional terms generated due to due to noise being lifted to observable space.
% The model identification result can be derived as 
% \begin{equation}\label{pertub_Koopman}
%     [{A},{B}_2] = \hat{\mathbf{X}}^2_{\mathrm{lift}}\left[\hat{\mathbf{X}}^1_{\mathrm{lift}}; \mathbf{U}\right]^{\dagger} = (\mathbf{X}^2_{\mathrm{lift}}+E_{2})\left[(\mathbf{X}^1_{\mathrm{lift}}+E_{1}); \mathbf{U}\right]^{\dagger}
% \end{equation}
We are interested in exploring how the noisy data filtered by basis function will perturb the regression matrices $({A}, {B}_{2})$ from nominal matrices $(A_{p}, B_{p})$. The model mismatch is denoted as $(\Delta A, \Delta B) = (A, B_{2}) - (A_{p},B_{p}) $. Now, we will derive the bound of $(\Delta A, \Delta B)$. The spectrum (natural) norm is of interest and defined as $\|A \|_{2} = \max_{\|x\|_{2}\neq 0} \frac{\|Ax\|_{2}}{\|x \|_{2}}$, which interprets the model mismatch in terms of $l_{2}$ norm of system signals.

\begin{theorem}\cite{stewart1977perturbation}
    For any $S$, $T$ and perturbation $E$ such that $T = S + E$, the perturbed pseudo-inverse $T^{\dagger} - S^{\dagger}  = \mathcal{O}(E)$ is bounded as
    \begin{equation}
        \| T^{\dagger} - S^{\dagger} \|  = \| \mathcal{O}(E) \|  \leq \mu \max\left\{  \|S^{\dagger} \|_{2}^{2},  \|T^{\dagger} \|_{2}^{2} \right\} \|E\|.
    \end{equation}
    $\mu = \frac{1+\sqrt{5}}{2}$ if the spectrum norm is considered. 
\end{theorem}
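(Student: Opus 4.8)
The plan is to base the argument on Wedin's exact perturbation identity for pseudo-inverses, which expresses $T^{\dagger}-S^{\dagger}$ (not merely to first order) as a sum of three terms. Writing $E=T-S$, the identity reads
\begin{equation}
T^{\dagger}-S^{\dagger} = \underbrace{-\,T^{\dagger}ES^{\dagger}}_{R_1} + \underbrace{T^{\dagger}(T^{\dagger})^{*}E^{*}(I-SS^{\dagger})}_{R_2} + \underbrace{(I-T^{\dagger}T)E^{*}(S^{\dagger})^{*}S^{\dagger}}_{R_3}.
\end{equation}
First I would verify this identity directly from the four Moore--Penrose conditions (it follows by inserting $T^{\dagger}=T^{\dagger}(T^{\dagger})^{*}T^{*}$ and its dual into the elementary relation $T^{\dagger}-S^{\dagger}=T^{\dagger}(S-T)S^{\dagger}+T^{\dagger}(I-SS^{\dagger})-(I-T^{\dagger}T)S^{\dagger}$). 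Setting $s=\|S^{\dagger}\|_{2}$ and $t=\|T^{\dagger}\|_{2}$, submultiplicativity together with the fact that the orthogonal projectors $I-SS^{\dagger}$ and $I-T^{\dagger}T$ have unit spectral norm gives the blockwise bounds $\|R_1\|\le st\,\|E\|$, $\|R_2\|\le t^{2}\|E\|$, and $\|R_3\|\le s^{2}\|E\|$.

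Merely adding these three estimates yields the crude constant $3$; the work required to sharpen it to $\mu=\tfrac{1+\sqrt{5}}{2}$ is to exploit the orthogonality hidden in the projectors. Using $S^{\dagger}(I-SS^{\dagger})=0$ and $(I-T^{\dagger}T)T^{\dagger}=0$, I would show that $R_1$ is supported on the domain $\mathcal{R}(S)$ with range in the row space $\mathcal{R}(T^{*})$, that $R_2$ is supported on $\mathcal{R}(S)^{\perp}$, and that $R_3$ maps into $\mathcal{R}(T^{*})^{\perp}$. In the orthogonal coordinates induced by the splittings $\mathcal{R}(S)\oplus\mathcal{R}(S)^{\perp}$ on the domain and $\mathcal{R}(T^{*})\oplus\mathcal{R}(T^{*})^{\perp}$ on the range, the three terms therefore assemble into a block operator with a vanishing corner,
\begin{equation}
T^{\dagger}-S^{\dagger} \;\cong\; \begin{bmatrix} R_1 & R_2 \\ R_3 & 0 \end{bmatrix},
\end{equation}
so that $\|T^{\dagger}-S^{\dagger}\|_{2}$ equals the norm of this block operator and, by monotonicity of the spectral norm under nonnegative entrywise domination, is bounded by the spectral norm of the scalar matrix of the blockwise bounds.

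The final step is the scalar optimization. It remains to bound
\begin{equation}
\left\| \begin{bmatrix} st & t^{2} \\ s^{2} & 0 \end{bmatrix} \right\|_{2}\|E\|,
\end{equation}
for which I would compute the largest eigenvalue of the associated Gram matrix, whose trace is $s^{4}+s^{2}t^{2}+t^{4}$ and whose determinant is $s^{4}t^{4}$. Since the spectral norm is symmetric in $s,t$, I normalize by $\max\{s^{2},t^{2}\}$ and optimize over the ratio $s/t\in[0,1]$; the worst case occurs at $s=t$, where the largest singular value equals $\sqrt{\tfrac{3+\sqrt{5}}{2}}=\tfrac{1+\sqrt{5}}{2}$, pinning the constant to the golden ratio. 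I expect the main obstacle to be not any single estimate but the bookkeeping that legitimizes the block-matrix assembly: one must track precisely which projector annihilates which factor so that the off-diagonal and zero blocks genuinely live in orthogonal subspaces, since it is exactly this orthogonality---rather than the triangle inequality---that lowers the constant from $3$ to $\tfrac{1+\sqrt{5}}{2}$.
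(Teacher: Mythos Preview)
The paper does not prove this theorem; it is quoted from Stewart's 1977 survey and used as a black box. Your proposal reproduces exactly the argument that underlies that classical result: Wedin's three-term identity, the observation that the projectors $I-SS^{\dagger}$ and $I-T^{\dagger}T$ force the three pieces into distinct blocks of a $2\times2$ operator matrix with a zero corner, and the scalar optimization that yields the golden-ratio constant. The trace and determinant computations and the reduction to the ratio $s/t$ are all correct, and the worst case is indeed $s=t$.

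One phrasing to tighten: the passage ``monotonicity of the spectral norm under nonnegative entrywise domination'' is not the right justification at the block level, since the spectral norm is not entrywise monotone on general matrices. What you actually need is two steps: first the standard block-operator inequality $\bigl\|[M_{ij}]\bigr\|_{2}\le\bigl\|[\,\|M_{ij}\|_{2}\,]\bigr\|_{2}$ (proved by the direct estimate $\|M_{11}x_1+M_{12}x_2\|\le\|M_{11}\|\|x_1\|+\|M_{12}\|\|x_2\|$ on each block row), and then Perron--Frobenius monotonicity on the resulting \emph{nonnegative} $2\times2$ scalar matrix to pass from the exact block norms to the upper bounds $st,\,t^{2},\,s^{2}$. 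With that clarification the argument is complete and matches the literature proof that the paper is citing.
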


Let $S = \left[(\hat{\mathbf{X}}^1_{\mathrm{lift}}-E_{1}); \mathbf{U}\right]$, $T = \left[\hat{\mathbf{X}}^1_{\mathrm{lift}}; \mathbf{U}\right]$, so $E = \left[E_{1}; 0\right]$.  Therefore, from equation (\ref{pertub_Koopman}), the perturbed Koopman model matrices can be derived as 
\begin{align}
[{A}_p,{B}_p] = & (\hat{X}_{\mathrm{lift}}^{2} - E_2)S^{\dagger}\\
= & (\hat{\mathbf{X}}^2_{\mathrm{lift}}-E_{2}) (T^{\dagger} - \mathcal{O}(E)) \\ 
= & \hat{\mathbf{X}}^2_{\mathrm{lift}}T^{\dagger} - E_{2}T^{\dagger} - (\hat{\mathbf{X}}^2_{\mathrm{lift}} - E_{2})\mathcal{O}(E) \\ 
= & [{A}, {B}_{2}] - E_{2}T^{\dagger} - \hat{\mathbf{X}}^2_{\mathrm{lift}}\mathcal{O}(E) +E_{2}\mathcal{O}(E)\\
= & [{A}, {B}_{2}] - \mathbf{\Delta}_{A,B}
\end{align}
where
\begin{align}
    \mathbf{\Delta}_{A,B} = [\Delta A, \Delta B] = E_{2}T^{\dagger} + \hat{\mathbf{X}}^2_{\mathrm{lift}}\mathcal{O}(E) -E_{2}\mathcal{O}(E)
\end{align}
The perturbations on system matrices include first-order perturbation $E_{2}T^{\dagger}, \mathbf{X}^2_{\mathrm{lift}}\mathcal{O}(E)$, and second-order perturbation $E_{2}\mathcal{O}(E)$. The maximum disturbance caused by the measurement noise to the system identification process can be obtained by finding out the upper bound of the $\mathbf{\Delta}_{A,B}$. The spectrum norm of system matrices perturbation is bounded and calculated by
\begin{equation}
    \| [\Delta {A}, \Delta{B}]  \| \leq  \|E_{2}\| \|T^{\dagger}\| + \| \hat{\mathbf{X}}^2_{\mathrm{lift}} - E_2\| \| \mathcal{O}(E_{1}) \| = U
\end{equation}

% \begin{equation}
%     \|[\Delta {A}, \Delta{B}]  \| \leq  \|E_{2}\| \|S^{\dagger}\| + \| \mathbf{X}^2_{\mathrm{lift}}\| \| \mathcal{O}(E_{1}) \| + \| E_{2}\| \|\mathcal{O}(E_{1})\| = U
% \end{equation}

% From the definition of spectrum (natural) norm, it is easy to obtain that $\|[\Delta A, \Delta B] [x^{T};u^{T}]\|_{2} \leq U \|[x^{T}; u^{T} ]\|_{2}$. This is a sector bound of the model mismatch caused by the noisy data.

Now, for the output transition matrix, the perturbation from $C_{p}$ to ${C}_{2}$ is evaluated. $\hat{\textbf{Y}}$ is a noisy output data matrix by perturbation $m_{t}$ as $\hat{y}_t = y_t + m_t$
\begin{align}
    \hat{\textbf{Y}} = ~[y_1 + m_1, y_2 + m_2, ~\hdots~ , y_N + m_N] = \textbf{Y} + M
\end{align}

Writing the expression for nominal output matrix $C_p$ in terms of perturbed lifted data matrix as
\begin{align}\label{output_eqn}
    C_p & = \mathbf{Y}\left[ \mathbf{X}^1_{\mathrm{lift}} \right]^{\dagger} \\
    {C}_{p} & = (\hat{\mathbf{Y}} - M)\left[\hat{\mathbf{X}}^1_{\mathrm{lift}} - E_1 \right]^{\dagger} 
\end{align}
Assuming $[\hat{\mathbf{X}}^1_{\mathrm{lift}} - E_1]^{\dagger} = ~[\hat{\mathbf{X}}^1_{\mathrm{lift}}]^{\dagger} - \mathcal{O}(E_{1})$, where $\mathcal{O}(E_{1})$ is the perturbation on pseudo-inverse. Therefore, 
\begin{align}\label{output_solved}
    {C}_{p} = & ~(\hat{\mathbf{Y}} - M)([\hat{\mathbf{X}}^1_{\mathrm{lift}}]^{\dagger} - \mathcal{O}(E_{1}))\\
    = & ~ \hat{\mathbf{Y}}\left[\hat{\mathbf{X}}^1_{\mathrm{lift}} \right]^{\dagger}  - M[\hat{\mathbf{X}}^1_{\mathrm{lift}}]^{\dagger} - \hat{Y}\mathcal{O}(E_{1}) + M\mathcal{O}(E_{1})\\
    = & ~ C_{2} - \Delta C
\end{align}    
where,
\begin{align}
    \Delta C = M[\hat{\mathbf{X}}^1_{\mathrm{lift}}]^{\dagger} + \hat{Y}\mathcal{O}(E_{1}) - M\mathcal{O}(E_{1})
\end{align}
% \begin{align}\label{output_solved}
%     {C}_{2} = & ~(\mathbf{Y} + M)([\mathbf{X}^1_{\mathrm{lift}}]^{\dagger} + \mathcal{O}(E_{1}))\\
%     = & ~ C_{p} + M[\mathbf{X}^1_{\mathrm{lift}}]^{\dagger} + Y\mathcal{O}(E_{1}) + M\mathcal{O}(E_{1})
% \end{align}        

The perturbation on the system matrix $C$ has first-order $M[\hat{\mathbf{X}}^1_{\mathrm{lift}}]^{\dagger}, ~ Y\mathcal{O}(E_{1})$ and second-order $M\mathcal{O}(E_{1})$.  The spectral norm of the perturbed output matrix $\Delta C = C_{2} - C_{p}$ can be bounded as:
\begin{equation}
    \|\Delta {C}\| \leq  \|M\| \|[\hat{\textbf{X}}^1_{lift}]^{\dagger}\| + \| \hat{Y} - M\| \|\mathcal{O}(E_{1}) \| = V
\end{equation}

Based on the above calculation for perturbation on Koopman model identification due to measurement noises, the model mismatch is bounded. Denote the model mismatch as $f_{s}(\Tilde{x}_{t}, u_{t}, t)= A \Tilde{x}_{t} + B_{2} u_{t} - {A}_{p}\Tilde{x}_{t} - {B}_{p} u_{t} = [\Delta A, \Delta B][\Tilde{x}_{t}^{T},  
u_{t}^{T}]^{T}$, so $\|f_{s} \|_{2} \leq  \|U [\Tilde{x}_{t}^{T}
u_{t}^{T}]^{T}\|_{2}$. This provides a sector bound of the model mismatch of $f_{s}$. Similarly, denote $v_{s}(\Tilde{x}_{t}, t) = C_{2} \Tilde{x}_{t} - {C}_{p} \Tilde{x}_{t} = \Delta C \Tilde{x}_{t}$, so $\|v_{s} \|_{2} \leq V \| \Tilde{x}_{t}\|_{2}$. Therefore, the true Koopman model can be expressed by the perturbed Koopman model with bounded model mismatch as follows
\begin{equation}\label{Koopman_wNoiseBound}
	\begin{aligned} 
	\Tilde{x}_{t+1} & = A \Tilde{x}_{t} + B_{2} u_{t} + B_{1}w_{t} - f_{s} \\ 
 z_{t} &=C_{1} \Tilde{x}_{t}+D_{12} u_{t} \\ 
 y_{t} &= C_{2} \Tilde{x}_{t} - v_{s} \end{aligned}
\end{equation}
where $C_{2}$ is an identity matrix since we are considering a state feedback system, $C_{1}, D_{12}$ are user-selected weighting matrices for the performance output $z_{t}$. $w_{t}$ is external disturbance, and $f_{s}$ can be written as
\begin{align}
   \| f_{s}(\Tilde{x}_t, u_t, t)\|_{2} \leq \| \left[\begin{array}{cc}
        U_1 & U_2 \\
    \end{array}\right] \left[\begin{array}{c}
         \Tilde{x}_{t}  \\
         u_{t}
    \end{array}\right] \|_2, \ \& \ v_{s}(\Tilde{x}_t, u_t, t) \leq \| \left[\begin{array}{cc}
        V_1 & 0 \\
    \end{array}\right] \left[\begin{array}{c}
         \Tilde{x}_{t}  \\
         u_{t}
    \end{array}\right] \|_2
\end{align}
where 
\begin{align}
    \| \left[ U_1, U_2\right] \|_{2} = U \ \& \ \| \left[ V_1, 0\right] \|_{2} = V
\end{align}

% Similarly, denote $v_{s}(\Tilde{x}_{t}, t) = C_{2} \Tilde{x}_{t} - {C}_{p} \Tilde{x}_{t} = \Delta C \Tilde{x}_{t}$, so $\|v_{s} \|_{2} \leq V \| \Tilde{x}_{t}\|_{2}$. 

The above discussion shows that the Koopman model is subject to biases when identified from noisy data of states. A similar conclusion can be drawn with noisy input and output data. It also provides the upper bound on the model mismatch for states. The next section will begin discussing the development of a robust dual-loop control strategy for the biased Koopman model. Robust dual-loop control takes into account model mismatch and generates compensation control inputs to achieve guaranteed robustness. In other words, the Koopman model-based control is improved in its ability to reject measurement noises.

% Based on the above discussion, the Koopman model is subject to biases when identified from noisy data of states, inputs and outputs. In next section, the biased Koopman model will be addressed by the robust dual-loop control strategy that takes into account model mismatch and generates compensation control inputs to achieve guaranteed robustness. In other words, the Koopman model-based control is improved in the ability to reject measurement noises.

\section{Dual-loop robust control}\label{section4}

Consider the discrete-time Koopman model matrices from perturbed data points~\eqref{Koopman_wNoiseBound}. It is assumed that $(A, B_{2})$ are stabilizable, and $(C_{2}, A)$ are detectable, such that there exist stabilizing controllers for the system. The $H_{\infty}$ control usually produces conservative performance since it aims to handle the worst-case model uncertainty. However, the dual-loop control can relieve the conservativeness by introducing the compensation block \cite{TIE_dualloop}.

\begin{figure}[!htb]
	\centering
	\includegraphics[width=0.65\linewidth]{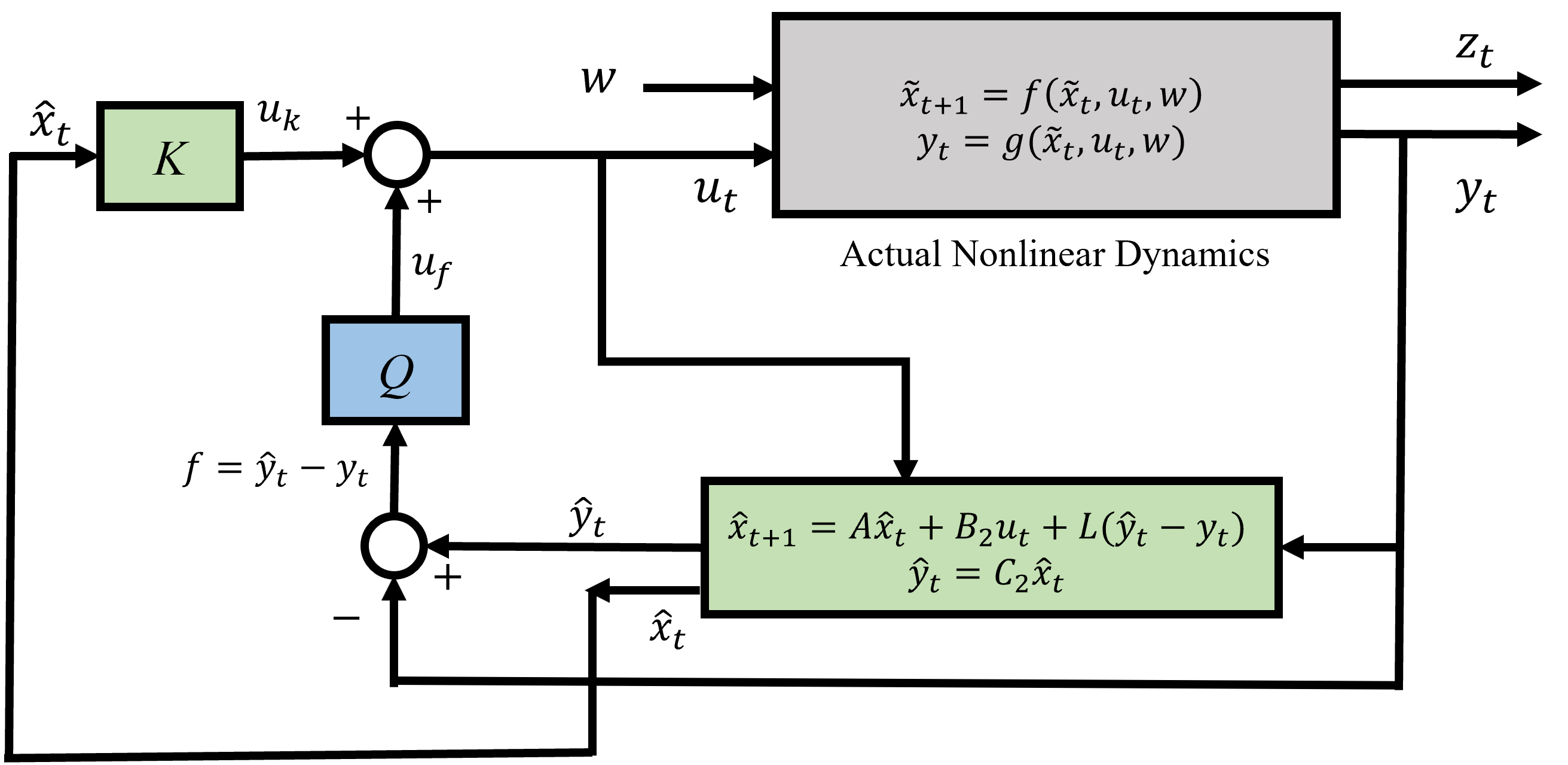}
	\caption{The dual-loop control strategy with the nominal control (green) and robust loop (blue) designed on the data-driven Koopman model.}
	\label{fig:Implementation}
\end{figure}

As shown in Fig.~\ref{fig:Implementation}, the dual-loop control scheme has two control loops for nominal performance and robustness, respectively. The first loop consists of a Luenberger observer and state-feedback control with gains $(K, L)$ based on the Koopman model. The second loop is designed with the operator $Q$ in the $H_{\infty}$ approach to bring in the guaranteed robust stability and performance. The residual signal $f_{t} = \hat{y}_{t} - y_{t}$ is the deviation of estimated outputs from the Koopman observer and the actual outputs, indicating the mismatch between the Koopman model and the actual system. When there is no model mismatch, the residual signal $f_{t} = 0$ goes through $Q$ and produces $u_{f} = Q f_{t} = 0$. The overall controller still remains the nominal controller. If there is a model mismatch, the state observer renders biased estimations and outputs. The robust control loop $Q$ is then triggered, and the extra control signal $u_{f,t} = Q f_{t}$ compensates for the nominal control input $u_{k,t}$. Moreover, $u_{f,t}$ is generated by $Q$, filtering the residual signal and adapting to the 'size' of model mismatch to bring robustness to the overall system.

The design of the dual-loop control scheme for the system~\eqref{Koopman_wNoiseBound} is formulated as follows: Let $(K, L)$ be any given nominal controller stabilizing Koopman (nominal) model identified from noisy data, and take the form
\begin{equation}\label{nominalcontrol}
	\begin{aligned} {\hat x}_{t+1}&=A \hat x_{t}+B_{2} u_{t} +L(\hat y_{t}-y_{t}) \\ \hat y_{t} &=C_{2} \hat{x}_{t}, \;\ u_{k,t}=K\hat{x}_{t}. \end{aligned}
\end{equation}
Assuming $Q$ to be the dynamic form:
\begin{equation}\label{Qcontrol}
	\begin{aligned} {x}^{Q}_{t+1}&=A_{Q} x^Q_{t}+B_Q f_{t} \\ u_{f,t} &=C_Q x_{t}^Q + D_{Q}f_{t}, \end{aligned}
\end{equation}
The control design problem is to find controller matrices $A_Q$, $B_Q$, $C_{Q}$ and $D_Q$ to achieve robust stability under sector-bounded model mismatch $f_{s}, v_{s}$
 and achieve robust performance $\left\|T_{zw}\right\|_{\infty} < \gamma$, where $T_{zw}$ is the closed-loop system from $w$ to $z$.

\subsection{Nominal Controller Design as LQG}
The nominal controller is designed by the LQG controller to obtain optimal performance. With nominal controller gains $(K, L)$, the identified system \eqref{Koopman_wNoiseBound} can be rewritten into \eqref{aug_model_raw} by including estimation error $e_{t} = \Tilde{x}_{t}- \hat{x}_{t}$ into state vector.

\begin{equation}\label{aug_model_raw}
	\begin{aligned} \Tilde{x}_{t+1} &=\left(A+B_{2} K\right) \Tilde{x}_{t} -B_{2} K e_{t} + B_{2} u_{f,t} - f_s + B_{1}w_{t}\\ {e}_{t+1} &=\left(A+L C_{2}\right) e_{t}+ B_{1}w_{t} - f_s - Lv_s\\ z_{t} &=\left(C_{1}+D_{12} K\right) \Tilde{x}_{t}-D_{12} K e_{t}+D_{12} u_{f,t} \\ f_{t} &=-C_{2} e_{t} + v_s \end{aligned}
\end{equation}

It can be seen from \eqref{aug_model_raw} that the estimation error $e_{t}$ and state $\Tilde{x}_{t}$ are perturbed by plant-model mismatch $f_s$ and $v_s$, and the residual $f_{t}$ consists of estimation error and model mismatch $v_{s}$. Thus, the compensation input $u_{f,t}$ is generated by the operator $Q $ on residual signal $f_{t}$ to bring in the guaranteed robustness.

% \begin{equation}\label{aug_model_raw}
% 	\begin{aligned} \Tilde{x}_{t+1} &=\left(A+B_{2} K\right) \Tilde{x}_{t} -B_{2} K e_{t} + B_{2} u_{f,t} - f_s + B_{1}w_{t}\\ {e}_{t+1} &=\left(A+L C_{2}\right) e_{t}+ B_{1}w_{t} - f_s\\ z_{t} &=\left(C_{1}+D_{12} K\right) \Tilde{x}_{t}-D_{12} K e_{t}+D_{12} u_{f,t} \\ f_{t} &=-C_{2} e_{t} \end{aligned}
% \end{equation}

% It can be seen from \eqref{aug_model_raw} that the estimation error $e_{t}$ and state $\Tilde{x}_{t}$ are perturbed by plant-model mismatch $f_s$, and the residual $f_{t}$ consists of estimation error. Thus, the compensation input $u_{f,t}$ is generated by the operator $Q $ on residual signal $f_{t}$ to bring in the guaranteed robustness.

By defining
\begin{equation*}
	\bar{x}_{t}=\left[\begin{array}{c} \tilde{x}_{t} \\ 
 {e}_{t}\end{array}\right], \quad \bar{A}=\left[\begin{array}{cc}{A+B_{2} K} & {-B_{2} K} \\ {0} & {A+L C_{2}}\end{array}\right], \quad \bar{D}_{12}=D_{12}, \bar{D}_{21} = [\mathbf{0} \quad I]
\end{equation*}
\begin{equation*}
	\bar{N}_{t}=\left[\begin{array}{c} f_s \\ 
 v_s\end{array}\right], \quad \bar{F}=\left[\begin{array}{cc} -I & 0\\ -I & -L\end{array}\right], \quad \bar{B}_{1} =  \left[\begin{array}{c}{B_{1}} \\ {B_{1}}\end{array}\right], \quad \bar{B}_{2}=\left[\begin{array}{c}{B_{2}} \\ \mathbf{0}\end{array}\right],
\end{equation*}
\begin{equation*}
	\bar{C}_{1}=\left[\begin{array}{cc}{C_{1}+D_{12} K} & {-D_{12} K}\end{array}\right], \quad \bar{C}_{2}=\left[\begin{array}{cc} \mathbf{0} & {-C_{2}}\end{array}\right]
\end{equation*}

the augmented system can be rearranged as
\begin{equation}\label{aug_model}
	\begin{aligned} \bar{x}_{t+1} &= \bar{A} \bar{x}_{t}+\bar{F}\bar{N}_{t} + \bar{B}_{1} w_{t} + \bar{B}_{2} u_{f,t} \\ z &=\bar{C}_{1} \bar{x}_{t}+ \bar{D}_{12} u_{f,t} \\ f_{t} &=\bar{C}_{2} \bar{x}_{t} + \bar{D}_{21}\bar{N}_{t}. \end{aligned}
\end{equation}

where, 
\begin{align}
   \| \bar{N}_{t}(\bar{x}_{t}, u_{f,t}, t)\|_{2} = \| \left[\begin{array}{c} f_s(\bar{x}_{t}, u_{f,t}, t) \\ 
 v_s(\bar{x}_{t}, u_{f,t}, t)\end{array}\right]\|_{2} \leq \| \left[\begin{array}{cc}
        U' & U_2 \\
        V' & 0
    \end{array}\right] \left[\begin{array}{c}
         \bar{x}_{t}  \\
         u_{f,t}
    \end{array}\right] \|_2 \\
    U' = \left[ \begin{array}{cc}
        U_1 + U_2K & -U_2K \\
    \end{array}\right], \quad V' = \left[ \begin{array}{cc}
        V_1 & 0 \\
    \end{array}\right]
\end{align}

Now, the new 'output' $f_{t}$ is the deviation between estimated and actual outputs, and $u_{f,t}$ provides an additional degree of freedom of control. For ease of expression, system~\eqref{Koopman_wNoiseBound} is referred to as \textit{original} system, and system~\eqref{aug_model} is referred to as \textit{augmented} system.

\subsection{$H_{\infty}$ Control Design Under Bounded Model Mismatch}
With the definition of $Q$ defined in \eqref{Qcontrol}, the closed-loop system can be written as:

\begin{equation}\label{aug_model_new}
	\begin{aligned} 
 \bar{x}_{t+1} &= (\bar{A} + \bar{B}_2D_Q\bar{C}_2)\bar{x}_{t}+ \bar{B}_{2}C_Qx^Q_t + \bar{F}\bar{N}_{t} + \bar{B}_{1}w_{t} - B_2D_Qv_s\\
 x^Q_{t+1} &= A_Qx^Q_t + B_Q(\bar{C}_2\bar{x}_t + \bar{D}_{21}\bar{N}_{t})\\
 z_t &= \bar{C}_{1} \bar{x}_{t}+ \bar{D}_{12}C_Qx^Q_t + \bar{D}_{12}D_Q(\bar{C}_2\bar{x}_t -v_s) \\
 \end{aligned}
\end{equation}

Considering $D_Q = 0$, \eqref{aug_model_new} becomes
\begin{equation}\label{aug_model_final}
	\begin{aligned} 
 \bar{x}_{t+1} &= \bar{A}\bar{x}_{t} + \bar{B}_{2}C_Qx^Q_t + \bar{F}\bar{N}_{t} + \bar{B}_{1}w_{t} \\
 x^Q_{t+1} &= A_Qx^Q_t + B_Q(\bar{C}_2\bar{x}_t + \bar{D}_{21}\bar{N}_{t})\\
 z_t &= \bar{C}_{1} \bar{x}_{t}+ \bar{D}_{12}C_Qx^Q_t \\
 \end{aligned}
\end{equation}

Defining 
\begin{equation*}
	\bar{x}_{cl,t}=\left[\begin{array}{c} \bar{x}_{t} \\ x^Q_{t}\end{array}\right], \quad  \bar{F}_{cl} = \left[\begin{array}{c}
	     \bar{F} \\
      B_{Q}\bar{D}_{21}
	\end{array} \right],\quad \bar{A}_{cl}=\left[\begin{array}{cc} \bar{A} & \bar{B}_{2}C_Q \\ B_Q\bar{C}_2 & A_Q \end{array}\right],
\end{equation*}
\begin{equation*}
    \bar{B}_{cl} = \left[\begin{array}{c} \bar{B}_1 \\0 \end{array}\right], \quad \bar{C}_{cl}= \left[\begin{array}{c c}\bar{C}_{1} & \bar{D}_{12}C_Q\end{array}\right]
\end{equation*}

Final closed-loop equations for the entire system will be
\begin{equation}\label{Final_model}
\begin{aligned}
    \bar{x}_{cl,t+1} &= \bar{A}_{cl}\bar{x}_{cl,t} + \bar{B}_{cl}w_{t} + \bar{F}_{cl}{\bar{N}}_{t} \\
    z_{t} &= \bar{C}_{cl}\bar{x}_{cl,t}
\end{aligned}  
\end{equation}

Here, $\bar{F}_{cl}{\bar{N}}_{t}$ is the nonlinear term associated with noise perturbation to Koopman model, and the upper bound of ${\bar{N}}_{t}(\bar{x}_{cl,t})$ can be written as:
\begin{align}
    \| \bar{N}_{t}(\bar{x}_{cl,t}, t) \|_{2} = \|\left[\begin{array}{c} f_s(\bar{x}_{cl,t}, t) \\ 
 v_s(\bar{x}_{cl,t}, t)\end{array}\right]\|_{2} \leq \| \left[\begin{array}{cc}
        U' & UC_Q \\
        V' & 0
    \end{array}\right] \|_2
\end{align}

Equation (\ref{Final_model}) represents the final closed-loop model with feedback from dual-loop control. The condition to achieve the asymptotic stability and robust performance guarantee is provided in the lemma \ref{Lema1} below.

\begin{lemma}\label{Lema1}\cite{wei2012composite}
Consider a discrete-time system 
\begin{equation}\label{DT_system}
    \begin{aligned}
        x_{t+1} &= Gx_t + Ff(x_t,t) + H w_t\\
        z_t &= Cx_t \\
    \end{aligned}
\end{equation}
where the nonlinearity $f(x_{t},t)$ satisfies the sector bound denoted by a constant matrix $U_{*}$, $ \| f(x_{t}) \| \leq  \|U_{*} x_{t} \|$. For given parameters $\lambda >0$, $\gamma > 0$, if there exists $P>0$ satisfying
\begin{equation}\label{LMI_DT_RL}
   \left[ \begin{array}{cccc}
        G^{T}PG - P + \lambda^2U^T_*U_* & G^{T}PF & G^{T}PH & C^T \\
        \\
        F^{T}PG & F^TPF -\lambda^2I & F^TPH & 0 \\
        \\
        H^{T}PG & H^TPF & H^TPH - \gamma^2I & 0 \\
        \\
        C & 0 & 0 & -I \\
    \end{array} \right] < 0
\end{equation}

then \eqref{DT_system} is robustly asymptotically stable with sector-bounded $f(x_{t},t)$ in the absence of $w_{t}$, and satisfies robust performance $\|z\|_2 < \gamma \|w\|_2$ in the presence of $w_{t}$. 
\end{lemma}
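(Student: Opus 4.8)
The plan is to use the quadratic storage function $V(x_t)=x_t^T P x_t$ with the $P>0$ supplied by \eqref{LMI_DT_RL}, and to derive a single dissipation inequality from which both the stability and the performance claims follow. First I would form the quantity $\Delta V + z_t^T z_t - \gamma^2 w_t^T w_t$, where $\Delta V = V(x_{t+1})-V(x_t)$, and substitute the dynamics of \eqref{DT_system}, namely $x_{t+1}=Gx_t+Ff(x_t,t)+Hw_t$. Grouping all terms against the augmented vector $\xi_t=[x_t^T,\ f^T,\ w_t^T]^T$, this expression becomes a quadratic form $\xi_t^T \Xi_0 \xi_t$ whose $(1,1)$ block is $G^TPG-P+C^TC$, with off-diagonal blocks $G^TPF$ and $G^TPH$, and remaining diagonal blocks $F^TPF$ and $H^TPH-\gamma^2 I$.

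Next I would incorporate the sector bound through the S-procedure. The constraint $\|f(x_t,t)\|\le\|U_* x_t\|$ is equivalent to $x_t^T U_*^T U_* x_t - f^Tf \ge 0$, so adding the nonnegative slack $\lambda^2\,(x_t^T U_*^T U_* x_t - f^Tf)$ augments only the $(1,1)$ block by $\lambda^2 U_*^T U_*$ and the $(2,2)$ block by $-\lambda^2 I$, producing a matrix $\Xi$. The essential observation is then that \eqref{LMI_DT_RL} is exactly the Schur-complement form of $\Xi<0$: taking the Schur complement of the $(4,4)$ block $-I$ in \eqref{LMI_DT_RL} reinjects $+C^TC$ into the $(1,1)$ block and recovers $\Xi$ precisely, and since $-I<0$ the inequality \eqref{LMI_DT_RL} is equivalent to $\Xi<0$.

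Given $\Xi<0$, every $\xi_t\neq0$ satisfies $\Delta V + z_t^T z_t - \gamma^2 w_t^T w_t + \lambda^2(x_t^T U_*^T U_* x_t - f^Tf)<0$; because the sector term is nonnegative along any admissible trajectory, the dissipation inequality $\Delta V + z_t^T z_t - \gamma^2 w_t^T w_t<0$ holds. I would then extract the two conclusions separately. Setting $w_t\equiv0$ yields $\Delta V<0$ for $x_t\neq0$ (noting that the sector bound forces $f=0$ whenever $x_t=0$, so the origin is the only equilibrium), hence $V$ is a strict Lyapunov function and the origin is robustly asymptotically stable. For the performance bound I would take zero initial state and sum the dissipation inequality over $t\ge0$; the telescoping sum gives $V(x_\infty)-V(x_0)+\|z\|_2^2-\gamma^2\|w\|_2^2<0$, and with $V(x_0)=0$ and $V(x_\infty)\ge0$ this reduces to $\|z\|_2<\gamma\|w\|_2$.

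The argument is a standard dissipativity computation, so I do not expect a genuine obstacle; the only point demanding care is the bookkeeping of the S-procedure sign convention together with verifying that the Schur complement of the $-I$ block in \eqref{LMI_DT_RL} reproduces $\Xi$ exactly, since that equivalence is what ties the LMI to the Lyapunov inequality.
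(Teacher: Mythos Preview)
Your proposal is correct and follows essentially the same route as the paper: the paper also reduces both claims to negativity of the same $3\times3$ block quadratic form in $(x_t,f,w_t)$ and identifies it with \eqref{LMI_DT_RL} via the Schur complement of the $-I$ block. The only cosmetic difference is that the paper absorbs the sector-bound slack $\lambda^2(\|U_*x_t\|^2-\|f\|^2)$ into a modified Lyapunov function $V(t)=x_t^TPx_t+\lambda^2\sum_{\tau<t}(\|U_*x_\tau\|^2-\|f(x_\tau)\|^2)$ rather than invoking the S-procedure explicitly, but the resulting increment computation is identical to yours.
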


\begin{proof}
The entire proof is divided into two parts. Part 1 provides the necessary conditions for asymptotic stability, while Part 2 extends the proof to guarantee robust control performance.

$\Rightarrow$ \textbf{Part 1 - Robust stability under sector-bounded $f(x)$}

Consider a Lyapunov function $V(t)$ and $V(t+1)$ at time step $t$ and $t+1$, respectively, as
    \begin{equation}
        V(t) = x^{T}_{t}Px_{t} + \lambda^{2}\sum_{\tau = 0}^{k-1} (\|U_{*}x_\tau\|^{2} - \| f(x_\tau)\|^{2} )
    \end{equation}
    \begin{equation}
    V(t+1) = x^{T}_{t+1}Px_{t+1} + \lambda^{2}\sum_{\tau = 0}^{k} (\|U_{*}x_\tau\|^{2} - \| f(x_\tau)\|^{2} )
    \end{equation}

    Here, $V(t) \geq 0 $ since $\| f(x_{t}) \| \leq  \|U_{*}x_{t} \|$. In the absence of $w_{t}$, $\Delta V(t)$ can be derived as 
    \begin{equation}
        \begin{array}{rl}
            \Delta V(t)  & = V(t+1) - V(t)  \\
            & =  x^{T}_{t+1}Px_{t+1} - x^{T}_{t}Px_{t}  + \lambda^{2}(\|U_{*} x_{t}\|^{2} - \| f(x_{t})\|^{2} ) \\ 
            & = [Gx_{t} + Ff(x_{t})]^{T}P[Gx_{t} + Ff(x_{t})] \\
            & - x^{T}_{t}Px_{t} + \lambda^{2}[x^{T}_{t}U_{*}^{T}U_{*}x_{t} - f^{T}(x_{t})f(x_{t}) ] \\ 
            & = \left[\begin{array}{c}
                 x_{t} \\
                  f
            \end{array}\right]^{T} \left[\begin{array}{cc}
               G^{T}PG - P + \lambda^{2}U^T_*U_* & G^{T}PF \\
                F^{T}PG & F^{T}PF -\lambda^{2}I
            \end{array} \right] \left[\begin{array}{c}
                 x \\
                  f
            \end{array}\right]
        \end{array}
    \end{equation}

    If there exists a positive definite matrix $P$ that satisfies $$Q_{1} = \left[\begin{array}{cc}
               G^{T}PG - P + \lambda^{2}U^T_*U_* & G^{T}PF \\
                F^{T}PG & F^TPF -\lambda^{2}I
            \end{array} \right] < 0, $$ 
            
            then $\Delta V(t) = V(t+1) - V(t) < 0 $, $\lim_{k\rightarrow \infty}{V(t)} = 0$, therefore, the system achieves asymptotic stability with sector-bounded $f(x_{t})$. 

$\Rightarrow$ \textbf{Part 2 - Robust performance}:

            Now, considering a storage function 
            \begin{equation}
                J = \sum_{\tau = 0}^{\infty}{J(\tau)} = \sum_{\tau = 0}^{\infty}\left( \|z_\tau\|^{2} - \gamma^{2}\|w_\tau\|^{2} + \Delta V(\tau) \right)
            \end{equation}
            Then, in the presence of $w_{t}$, it is easy to derive the expression for $J(t)$ as
            \begin{equation}
                \begin{array}{rl}
                     J(t)
                     & = \Delta V(t) + \left[x^{T}_{t}C^{T}Cx_{t} - \gamma^{2} w^{T}_{t}w_{t}\right] \\
                     & = [Gx_{t} + Ff(x_{t}) + H w_{t}]^{T}P[Gx_{t} + Ff(x_{t}) + Hw_{t}] \\
                     & - x^{T}_{t}Px_{t}  + \lambda^{2}[x^{T}_{t}U_{*}^{T}U_{*}x_{t} - f^{T}_{t}f_{t} ] + [x^{T}_{t}C^{T}Cx_{t} - \gamma^{2} w^{T}_{t}w_{t}] \\
                     % & + [x^{T}_{t}C^{T}Cx_{t} - \gamma^{2} w^{T}_{t}w_{t}] \\
                     & = \left[\begin{array}{c}
                 x_{t} \\
                  f \\
                  w_{t}
            \end{array}\right]^{T}  Q_{2}\left[\begin{array}{c}
                 x_{t} \\
                  f \\
                  w_{t}
            \end{array}\right] 
                \end{array}
            \end{equation}

            where $Q_{2} = \left[ \begin{array}{ccc}
        G^{T}PG - P + \lambda^{2}U^T_*U_* + C^{T}C & G^{T}PF & G^{T}PH  \\
        F^{T}PG & F^TPF-\lambda^{2}I & F^TPH        \\
        H^{T}PG & H^TPF & H^TPH - \gamma^2I
    \end{array} \right]$, 
    
If there exists a positive definite matrix $P$ that satisfy $Q_{2} < 0$, which is equivalent to \eqref{LMI_DT_RL}, then $ J(t) < 0$ at any time instant $k$. Under zero initial condition, $J(0) = V(0) = 0$, and $J \geq \sum_{\tau = 0}^{\infty}\left( \|z_\tau\|^{2} - \gamma^{2}\|w_\tau\|^{2}\right)$,  therefore, $\sum_{\tau = 0}^{\infty}\left( \|z(\tau)\|^{2} - \gamma^{2}\|w(\tau)\|^{2}\right) < 0$. The robust performance $\|z\|_2 < \gamma \|w\|_2$ is satisfied with sector-bounded nonlinearity and external disturbance $w_{t}$.
\end{proof}

Based on the above Lemma, the proposed theorem below provides the controller synthesis conditions for the $H_{\infty}$ controller under sector-bounded model mismatch.

\begin{theorem}\label{Theorem1}[controller synthesis]
    For any given parameter $\lambda$, $\gamma$, and sector bound $U_*$, if there exist positive definite matrices $X_{1}$, $Y_{1}$, and matrices $\hat{A}_{Q}, \hat{B}_{Q}, \hat{C}_{Q}$ such that

\begin{equation}
\left[\begin{array}{cc}
    X_{1} &  I \\
     I & Y_{1}
\end{array} \right] > 0
\end{equation}

% \begin{equation}\small
% \left[\begin{array}{cccccccc}
% -X_1 & I & \bar{A}X_1 + \bar{B}_2\hat{C}_Q & \bar{A} &  \bar{F} & \bar{B}_{1}  & 0 & 0\\
% * & -Y_1 & \hat{A}_{Q} & Y_1\bar{A} + \hat{B}_Q\bar{C}_2 & Y_1\bar{F} & Y_{1}\bar{B}_{1}  & 0 & 0 \\
% * & * & -X_1 & I & 0 & 0 &  X_1\bar{C}_1^T + \hat{C}_Q^T\bar{D}_{12}^T  & X_1U'^{T} + \hat{C}_Q^TU_2^T\\
% * & * & * & -Y_1 & 0 & 0 & \bar{C}_{1}^T &  U'^{T}\\
% * & * & * & * & -\lambda^{2} & 0 & 0 & 0 \\
% * & * & * & * & * & -\gamma^2 I & 0 & 0 \\
% * & * & * & * & * & * & -I & 0 \\
% * & * & * & * & * & * & * & -\frac{1}{\lambda^{2}}I
% \end{array}\right] < 0
% \end{equation}

\begin{equation}\small
\left[\begin{array}{ccccccccc}
-X_1 & I & \bar{A}X_1 + \bar{B}_2\hat{C}_Q & \bar{A} &  \bar{F} & \bar{B}_{1}  & 0 & 0 & 0\\
* & -Y_1 & \hat{A}_{Q} & Y_1\bar{A} + \hat{B}_Q\bar{C}_2 & Y_1\bar{F} + \hat{B}_Q\bar{D}_{21} & Y_{1}\bar{B}_{1}  & 0 & 0 & 0\\
* & * & -X_1 & I & 0 & 0 &  X_1\bar{C}_1^T + \hat{C}_Q^T\bar{D}_{12}^T  & X_1U'^{T} + \hat{C}_Q^TU_2^T & X_1V'^T \\
* & * & * & -Y_1 & 0 & 0 & \bar{C}_{1}^T &  U'^{T} & V'^T\\
* & * & * & * & -\lambda^{2} & 0 & 0 & 0 & 0\\
* & * & * & * & * & -\gamma^2 I & 0 & 0 & 0\\
* & * & * & * & * & * & -I & 0 & 0\\
* & * & * & * & * & * & * & -\frac{1}{\lambda^{2}}I & 0\\
* & * & * & * & * & * & * & * & -\frac{1}{\lambda^{2}}I 
\end{array}\right] < 0
\end{equation}

Moreover, the controller is given as :
\begin{equation}
    \left[\begin{array}{cc}
        A_Q & B_Q \\
        C_Q & 0 \\
    \end{array}\right] = \left[\begin{array}{cc}
        Y_2 & Y_1\bar{B}_2 \\
        0 & I
    \end{array} \right]^{-1}\left[\left[\begin{array}{cc}
        \hat{A}_Q & \hat{B}_Q \\
         \hat{C}_Q & 0
    \end{array} \right] - \left[\begin{array}{cc}
        Y_1\bar{A}X_1 & 0 \\
         0 & 0
    \end{array} \right]\right]\left[\begin{array}{cc}
        X^T_2 & 0 \\
         \bar{C}_2X_1 & I
    \end{array} \right]^{-1}
\end{equation}
\end{theorem}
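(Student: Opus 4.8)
The plan is to specialize the analysis result of Lemma~\ref{Lema1} to the closed-loop system~\eqref{Final_model} and then convert the resulting bilinear matrix inequality into the two stated LMIs by a congruence transformation together with a linearizing change of controller variables. First I would apply Lemma~\ref{Lema1} with the identifications $G=\bar{A}_{cl}$, $F=\bar{F}_{cl}$, $H=\bar{B}_{cl}$, $C=\bar{C}_{cl}$, and the sector matrix $U_*=\begin{bmatrix} U' & U_2 C_Q \\ V' & 0 \end{bmatrix}$ read off from the closed-loop bound on $\bar{N}_t$. By the lemma, robust asymptotic stability and $\|T_{zw}\|_\infty<\gamma$ are guaranteed whenever there exists $P>0$ solving the analysis inequality~\eqref{LMI_DT_RL}. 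The difficulty is that this inequality is not jointly convex: the blocks of $\bar{A}_{cl}$, $\bar{F}_{cl}$, $\bar{C}_{cl}$, and $U_*$ all carry the unknown controller matrices $A_Q,B_Q,C_Q$, and these appear multiplied by the unknown $P$ in terms such as $\bar{A}_{cl}^{T}P\bar{A}_{cl}$ and inside $\lambda^2 U_*^{T}U_*$.

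Second, I would linearize the $P$-dependence with Schur complements. The terms quadratic in $P$ can be written as $[\,\bar{A}_{cl},\,\bar{F}_{cl},\,\bar{B}_{cl},\,0\,]^{T}P[\,\bar{A}_{cl},\,\bar{F}_{cl},\,\bar{B}_{cl},\,0\,]$, so a Schur complement against $P>0$ replaces them by an extra block row/column containing $\bar{A}_{cl},\bar{F}_{cl},\bar{B}_{cl}$ with a diagonal $-P^{-1}$. Two further Schur complements move the performance term $\bar{C}_{cl}^{T}\bar{C}_{cl}$ and the sector term $\lambda^2 U_*^{T}U_*$ into separate blocks with diagonal entries $-I$ and $-\tfrac{1}{\lambda^2}I$ (one for each row of $U_*$, which is why two $-\tfrac{1}{\lambda^2}I$ blocks appear in positions $(8,8)$ and $(9,9)$). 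The result is an enlarged inequality that is affine in $\bar{A}_{cl},\bar{F}_{cl},\bar{B}_{cl},\bar{C}_{cl},U_*$ but contains $P$ and $P^{-1}$ simultaneously.

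Third, I would introduce the standard partition
\[ P=\begin{bmatrix} Y_1 & Y_2 \\ Y_2^{T} & Y_3 \end{bmatrix}, \qquad P^{-1}=\begin{bmatrix} X_1 & X_2 \\ X_2^{T} & X_3 \end{bmatrix}, \]
together with $\Pi_1=\begin{bmatrix} X_1 & I \\ X_2^{T} & 0 \end{bmatrix}$ and $\Pi_2=\begin{bmatrix} I & Y_1 \\ 0 & Y_2^{T} \end{bmatrix}$, which satisfy $P\Pi_1=\Pi_2$ and $\Pi_1^{T}P\Pi_1=\Pi_2^{T}P^{-1}\Pi_2=\begin{bmatrix} X_1 & I \\ I & Y_1 \end{bmatrix}$. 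Performing the congruence transformation by the block-diagonal matrix built from $\Pi_1$ (on the $-P$ block) and $\Pi_2$ (on the $-P^{-1}$ block), with identity on the remaining channels, and defining the change of variables
\[ \hat{C}_Q=C_Q X_2^{T}, \quad \hat{B}_Q=Y_2 B_Q, \]
\[ \hat{A}_Q=Y_1\bar{A}X_1+Y_1\bar{B}_2 C_Q X_2^{T}+Y_2 B_Q\bar{C}_2 X_1+Y_2 A_Q X_2^{T}, \]
collapses every bilinear product into $(\hat{A}_Q,\hat{B}_Q,\hat{C}_Q,X_1,Y_1)$. In particular the $C_Q$-dependent block $U_2 C_Q$ of $U_*$ becomes the linear entry $\hat{C}_Q^{T}U_2^{T}$ seen in position $(3,8)$, and $\bar{B}_2 C_Q$ becomes $\bar{B}_2\hat{C}_Q$ in position $(1,3)$. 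This produces exactly the $9\times 9$ LMI of the theorem, while the congruenced positivity constraint $\Pi_1^{T}P\Pi_1>0$ is precisely $\begin{bmatrix} X_1 & I \\ I & Y_1 \end{bmatrix}>0$.

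Finally, given a feasible solution, the $2\times 2$ LMI guarantees (via its Schur complement $Y_1-X_1^{-1}>0$) that $I-X_1Y_1$ is nonsingular, so one can factor $Y_2 X_2^{T}=I-Y_1 X_1$ with square invertible $Y_2,X_2$ and invert the change of variables; writing this inversion in compact matrix form yields the reconstruction formula stated in the theorem. I expect the main obstacle to be the simultaneous linearization of the sector-bound term: unlike textbook $H_\infty$ output-feedback synthesis, here $U_*$ itself depends on $C_Q$, so the congruence transformation and the definition of $\hat{C}_Q$ must be arranged so that the quadratic sector block $\lambda^2 U_*^{T}U_*$ and the dynamic coupling $\bar{B}_2 C_Q$ linearize under the \emph{same} substitution. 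Verifying that every off-diagonal block closes consistently under the chosen congruence is the bookkeeping-heavy core of the argument.
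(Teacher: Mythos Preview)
Your proposal is correct and essentially identical to the paper's proof: apply Lemma~\ref{Lema1} to the closed loop, Schur-complement the quadratic-in-$P$ and sector terms into separate blocks, perform the Scherer--Gahinet congruence with the partitioned Lyapunov matrix, and linearize via the change of variables $(\hat A_Q,\hat B_Q,\hat C_Q)$ you wrote down. The only cosmetic difference is that the paper writes the intermediate Schur form with two $-P$ diagonal blocks (off-diagonal $P\bar A_{cl}$) and applies the single congruence factor $X_{cl}^{T}$ to both, rather than your $-P$/$-P^{-1}$ split with $\Pi_1,\Pi_2$; the two routes coincide since $X_{cl}=\Pi_1$ and $X_{cl}^{T}P=Y_{cl}=\Pi_2^{T}$.
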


\vspace{10pt}
\begin{proof}

Upon comparing equation (\ref{Final_model}) and (\ref{LMI_DT_RL}), one can confirm the following equality: $[\bar{A}_{cl}, \bar{F}_{cl}, \bar{B}_{cl}, \bar{C}_{cl}] = [G, F, H, C]$. Re-writing the expression for $Q_2$ in terms of the original closed-loop system matrices will achieve

\begin{equation}
    Q_{2} = \left[ \begin{array}{ccc}
        \bar{A}_{cl}^{T}P\bar{A}_{cl} - P + \lambda^{2}U^T_*U_* + \bar{C}_{cl}^{T}\bar{C}_{cl} & \bar{A}_{cl}^{T}P\bar{F}_{cl} & \bar{A}_{cl}^{T}P\bar{B}_{cl}  \\
        \bar{F}_{cl}^{T}P\bar{A}_{cl} & \bar{F}_{cl}^TP\bar{F}_{cl}-\lambda^{2}I & \bar{F}_{cl}^TP\bar{B}_{cl}        \\
        \bar{B}_{cl}^{T}P\bar{A}_{cl} & \bar{B}_{cl}^TP\bar{F}_{cl} & \bar{B}_{cl}^TP\bar{B}_{cl} - \gamma^2I
    \end{array} \right] 
\end{equation}

where, $U_*$ is the sector bound given as $[U' \ U_2C_Q]$. Using Schur complement, $Q_{2}$ can be converted to an equivalent $Q_{3}$ given as

\begin{equation}
    Q_3 = 
   \left[ \begin{array}{cccccc}
        - P & P\bar{A}_{cl} & P\bar{F}_{cl} & P\bar{B}_{cl} & 0 & 0\\
        * & -P & 0 & 0 & \bar{C}_{cl}^{T} & U^{T}_{*}\\
        * & * & - \lambda^2 I & 0 & 0 & 0\\
        * & * & * & -\gamma^{2} I & 0 & 0\\
        * & * & * & * & - I & 0 \\
        * & * & * & * & 0 & -\frac{1}{\lambda^{2}} I \\
    \end{array} \right] < 0 
\end{equation}

Let $X_{cl}^{T} = \left[\begin{array}{cc}
    X_{1} &  X_{2} \\
     I & 0
\end{array} \right]$, $Y_{cl} = \left[\begin{array}{cc}
    I &  0 \\
     Y_{1} & Y_{2}
\end{array} \right]$, $Y_1X_1 + Y_2X^T_2 = I$, and $X_{cl}^{T}P = Y_{cl}$.

Therefore, $X_{cl}^{T}PX_{cl} = \left[\begin{array}{cc}
    X_{1} &  I \\
     I & Y_{1}
\end{array} \right].$ Left multiply $T = \mathrm{diag}(X_{cl}^{T},X_{cl}^{T},I,I,I,I)$ and right multiply $T^{T}$ on $Q_{3}$, one obtains

\begin{equation}
     Q_4 = 
   \left[ \begin{array}{cccccc}
        - X_{cl}^{T}PX_{cl} & Y_{cl}\bar{A}_{cl}X_{cl} & Y_{cl}\bar{F}_{cl} & Y_{cl}\bar{B}_{cl} & 0 & 0\\
        * & -X_{cl}^{T}PX_{cl} & 0 & 0 & X_{cl}^{T}\bar{C}_{cl}^{T} & X_{cl}^{T}U^{T}_{*}\\
        * & * & - \lambda^2 I & 0 & 0 & 0\\
        * & * & * & -\gamma^{2} I & 0 & 0\\
        * & * & * & * & - I & 0 \\
        * & * & * & * & 0 & -\frac{1}{\lambda^{2}} I \\
    \end{array} \right]
\end{equation}

Using the expressions for $\bar{A}_{cl}, \bar{F}_{cl}, \bar{B}_{cl}, \text{and} \bar{C}_{cl}$, the following expressions can be easily derived.

\begin{align}\label{2ndTerm}
    Y_{cl}\bar{A}_{cl}X_{cl} = \left[ \begin{array}{cc}
        \bar{A}X_1 + \bar{B}_2C_QX_2^T & \bar{A} \\
         Y_1\bar{A}X_1 + Y_1\bar{B}_2C_QX_2^T + Y_2B_Q\bar{C}_2X_1 + Y_2A_QX^T_2 & Y_1\bar{A} + Y_2B_Q\bar{C}_2 
    \end{array}\right]
\end{align}
\begin{align}\label{3rdTerm}
    Y_{cl}\bar{F}_{cl} = \left[ \begin{array}{c}
         \bar{F}  \\
         Y_1\bar{F}
    \end{array}\right], \ Y_{cl}\bar{B}_{cl} = \left[ \begin{array}{c}
         \bar{B}_1  \\
         Y_1\bar{B}_1
    \end{array}\right], 
\end{align}
\begin{align}\label{OtherTerm}
    X_{cl}^{T}\bar{C}_{cl}^{T} = \left[ \begin{array}{c}
        X_1\bar{C}_1^T + \hat{C}_Q^T\bar{D}_{12}^T \\
        \bar{C}_1^T 
    \end{array}\right], X_{cl}^{T}U_*^{T} = \left[ \begin{array}{cc}
        X_1U'^T + X_2C^T_QU_2^T & X_1V'^T \\
        U'^T & V'^T 
    \end{array}\right]
\end{align}

Defining change of variable as below.
\begin{equation}\label{ChangeofVar}
    \left[\begin{array}{cc}
        \hat{A}_Q & \hat{B}_Q \\
         \hat{C}_Q & 0
    \end{array} \right] = \left[\begin{array}{cc}
        Y_2 & Y_1\bar{B}_2 \\
        0 & I
    \end{array} \right]\left[\begin{array}{cc}
        A_Q & B_Q \\
        C_Q & 0 \\
    \end{array}\right]\left[\begin{array}{cc}
        X^T_2 & 0 \\
         \bar{C}_2X_1 & I
    \end{array} \right] + \left[\begin{array}{cc}
        Y_1\bar{A}X_1 & 0 \\
         0 & 0
    \end{array} \right], 
\end{equation}

Simplifying the right-hand side of the equation (\ref{ChangeofVar}) will give
\begin{equation}\label{NewControlMatrix}
    \left[\begin{array}{cc}
        \hat{A}_Q & \hat{B}_Q \\
         \hat{C}_Q & 0
    \end{array} \right] = \left[\begin{array}{cc}
        Y_1\bar{A}X_1 + Y_1\bar{B}_2C_QX_2^T + Y_2B_Q\bar{C}_2X_1 + Y_2A_QX^T_2 & Y_2{B}_Q \\
        C_QX_2^T & 0
    \end{array} \right]
\end{equation}

Upon substituting the new variables in equations (\ref{2ndTerm}) and (\ref{3rdTerm}), the expression for $Y_{cl}\bar{A}_{cl}X_{cl}$ can be written as
\begin{align}\label{New2ndterm}
    Y_{cl}\bar{A}_{cl}X_{cl} = \left[ \begin{array}{cc}
        \bar{A}X_1 + \bar{B}_2\hat{C}_Q & \bar{A} \\
         \hat{A}_Q & Y_1\bar{A} + \hat{B}_Q\bar{C}_2
    \end{array}\right]
\end{align}

Substituting the expressions \ref{New2ndterm}, \ref{3rdTerm}, and \ref{OtherTerm} into the expression for $Q_4$ will give the desired \textit{LMI} given in Theorem 2.

\end{proof}

\section{Simulation Results}\label{section5}

The current work considers a nonlinear Van Der Pol oscillator for demonstrating the theoretical formulations. The non-conservative system dynamics with nonlinear damping is given as 
\begin{equation}
\begin{aligned}
    \Dot{x}_1 &= x_2 \\ 
    \Dot{x}_2 &= \mu(1 - x^2_1)x_2 - x_1 + u
\end{aligned}
\end{equation}

The coefficient $\mu$ determines the damping coefficient, and for the current work, it was chosen as $\mu = 1$. The goal is to stabilize the origin by designing an output feedback control law based on data. For the current problem, the output $y$ of the system is assumed to be both the state measurements. Therefore, the final dynamics of the system can be written as

\begin{equation}
    \Dot{X} = f(X,u), \quad 
    y = X
\end{equation}
where, \begin{equation*}
    f = \left[\begin{array}{c}
         x_2\\
         \mu(1 - x^2_1)x_2 - x_1 + u  
    \end{array}\right]
\end{equation*}

The dynamics is first discretizing using Runge-Kutta 4 with a step size of $\Delta t = 0.01$. The discretised dynamics is then simulated for 20 seconds to collect data points. For lifting the dynamics, monomials of the order of up to 5 were chosen, making the observable space 21-dimensional. For data generation, a uniformly distributed random initial condition $x_0 \in [-1, 1]$ and control input $u \in [-10,10]$ was chosen. A uniform Gaussian white noise is added to the output measurements to simulate noisy observation. Based on the identified Koopman model, the proposed closed-loop control based on theorem (\ref{Theorem1}) is designed. 

% Also, a uniform Gaussian noise with variance $\sigma = 0.01$ was added to the measurement data to simulate a practical scenario.

Before moving forward, it is important to see how adding a small noise significantly changes the Koopman-identified model and why it is necessary to consider model mismatch due to measurement noise. Figure \ref{fig:NoiseIdentification} shows two cases of implementing the Koopman identification approach using measurements with different noise levels. Figure \ref{KI_SmallNoise} on the left is with a smaller noise variance $\sigma = 0.01$, and Figure \ref{KI_BigNoise} with bigger noise variance $\sigma = 0.05$. One obvious observation is that in both cases, the identified dynamics do not match true Van Der Pol oscillator behaviour. However, with higher noise case, the identified Koopman dynamics converges faster to the equilibrium point, which is the origin. Designing any feedback control strategies based on the Koopman-identified model and then implementing it on an actual system could lead to unwanted control behaviour as the model mismatch between the true and identified model increases drastically with measurement corruption.

\begin{figure*}[h!]\centering
\subfloat[\scriptsize $\sigma_{noise}$ - 0.01]{\minipage{0.52\textwidth}
 \includegraphics[width=\linewidth]{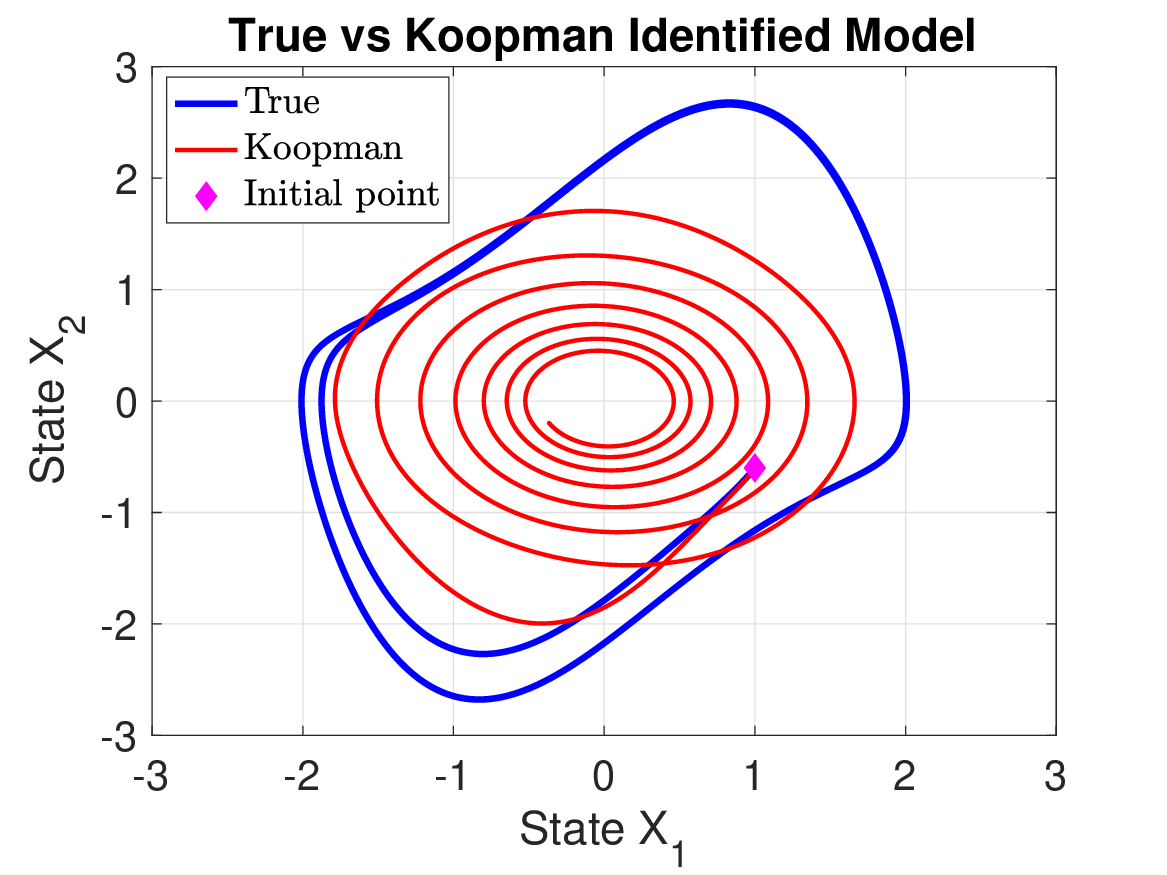}\label{KI_SmallNoise}
\endminipage}
\subfloat[\scriptsize $\sigma_{noise}$ - 0.05]{\minipage{0.52\textwidth}
\includegraphics[width=\linewidth]{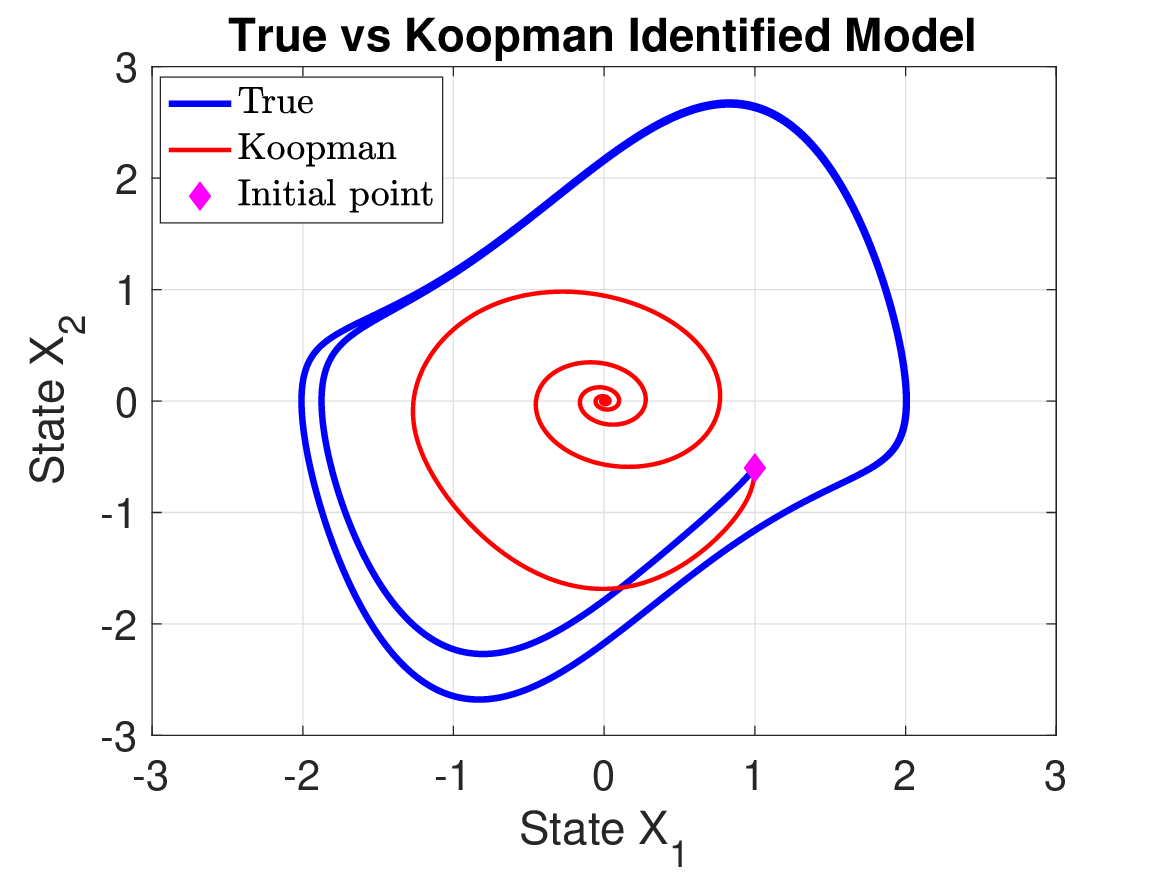}\label{KI_BigNoise}
\endminipage}
\caption{Changes in Koopman identified model with different noise standard deviation}
\label{fig:NoiseIdentification}
\end{figure*}

To showcase the efficacy of our proposed approach to handle the model mismatch in the Koopman-identified model due to measurement noise, we are working with the case when noise variance is $\sigma = 0.01$. The goal is to regulate the dynamics to achieve equilibrium at the origin. Linear Quadratic Regulator (LQG) control is also designed using the identified linear Koopman model to perform the comparison and show how the noise affects convergence. Both the control strategies designed using the Koopman-identified model are implemented to the original nonlinear dynamics. Figure \ref{fig:LQG_DualLoop_n01} shows the closed-loop control performance using both LQG and the proposed dual-loop. Because the identified model is an approximation of the nonlinear dynamics, the LQG control cannot accurately estimate the system states and, therefore, cannot converge to the origin. This is not the case with the $dual-loop$ control strategy. 

% For linear systems, various closed-loop control design strategies are available. To demonstrate the efficacy of our proposed approach in handling noisy measurement data, two other approaches are implemented, namely Linear Quadratic Regulator (LQG) and $H_{inf}$ control design. 

% All the controls designed using the Koopman-identified model are implemented to the original nonlinear dynamics. Figure \ref{fig:LQGonly} shows the closed-loop control performance using LQG. Because the identified model is an approximation of the nonlinear dynamics, the LQG control cannot accurately estimate the system states and, therefore, cannot converge to the origin. This is not the case with the $H_{inf}$ control. Figure \ref{fig:Hinfonly} shows the convergence to the original. However, convergence is not as quick as with the proposed dual-loop control, as shown in \ref{fig:DualLoop}. 

\begin{figure*}[h!]\centering
\subfloat[\scriptsize State $X_1$ - LQG]{\minipage{0.33\textwidth}
 \includegraphics[width=\linewidth]{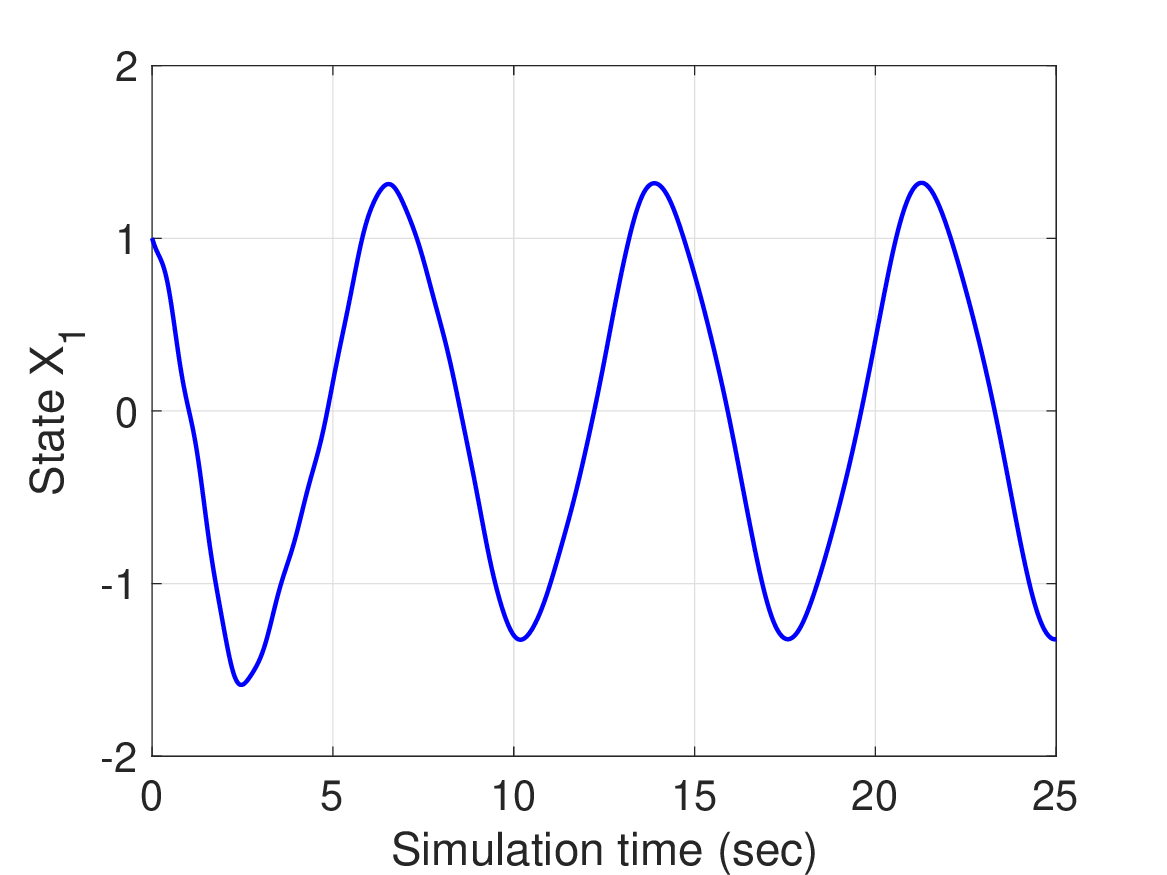}\label{figure2a}
\endminipage}
\subfloat[\scriptsize State $X_2$ - LQG]{\minipage{0.33\textwidth}
\includegraphics[width=\linewidth]{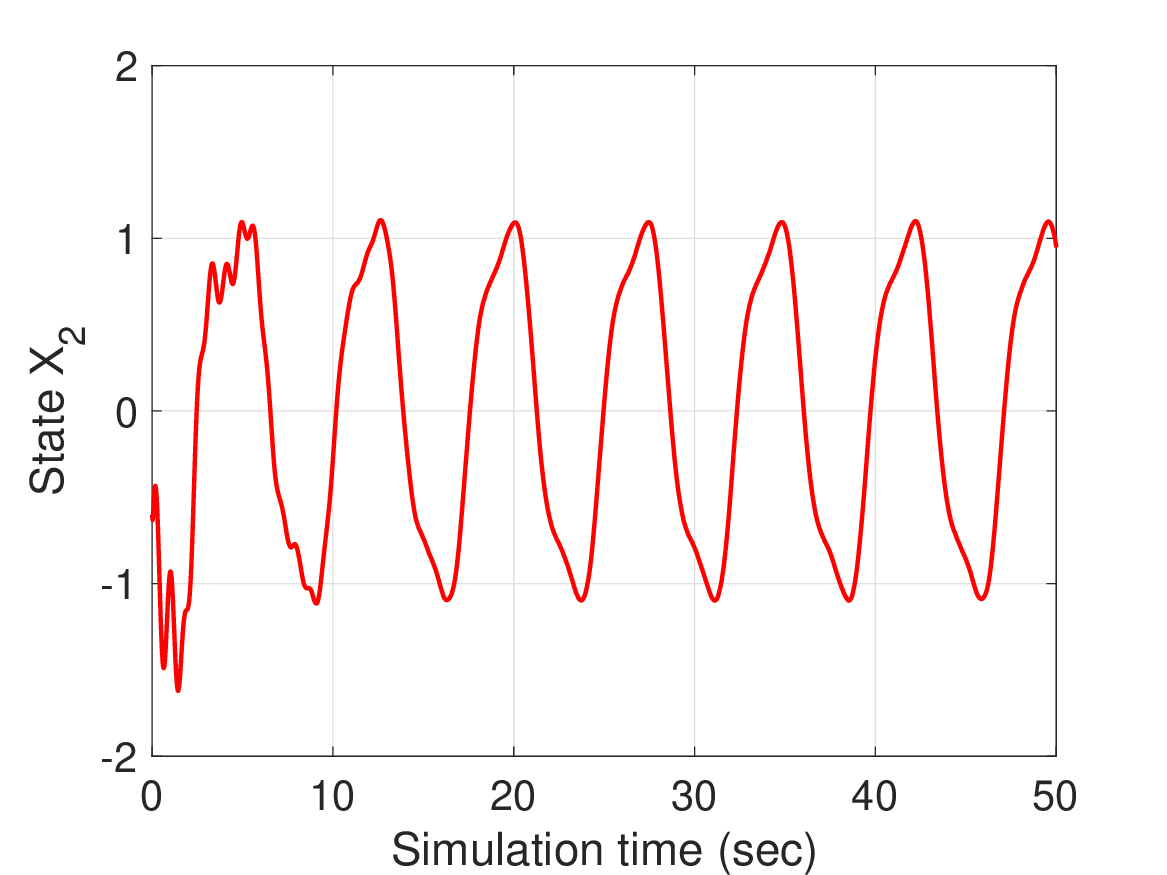}\label{figure2b}
\endminipage}
\subfloat[\scriptsize Control Input - LQG]{\minipage{0.33\textwidth}
\includegraphics[width=\linewidth]{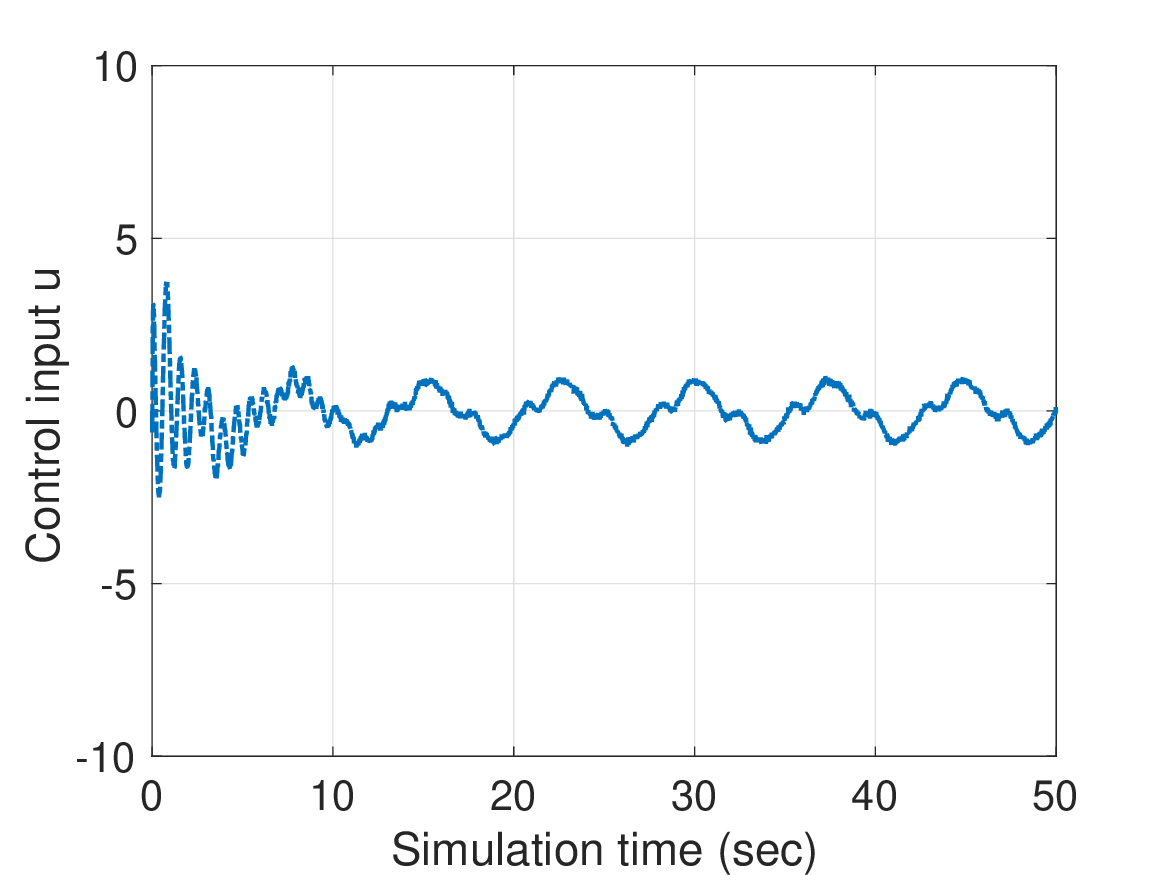}\label{figure2c}
\endminipage}\\
%\vspace{-1.0em}
\subfloat[\scriptsize State $X_1$ - Dual loop]{\minipage{0.33\textwidth}
 \includegraphics[width=\linewidth]{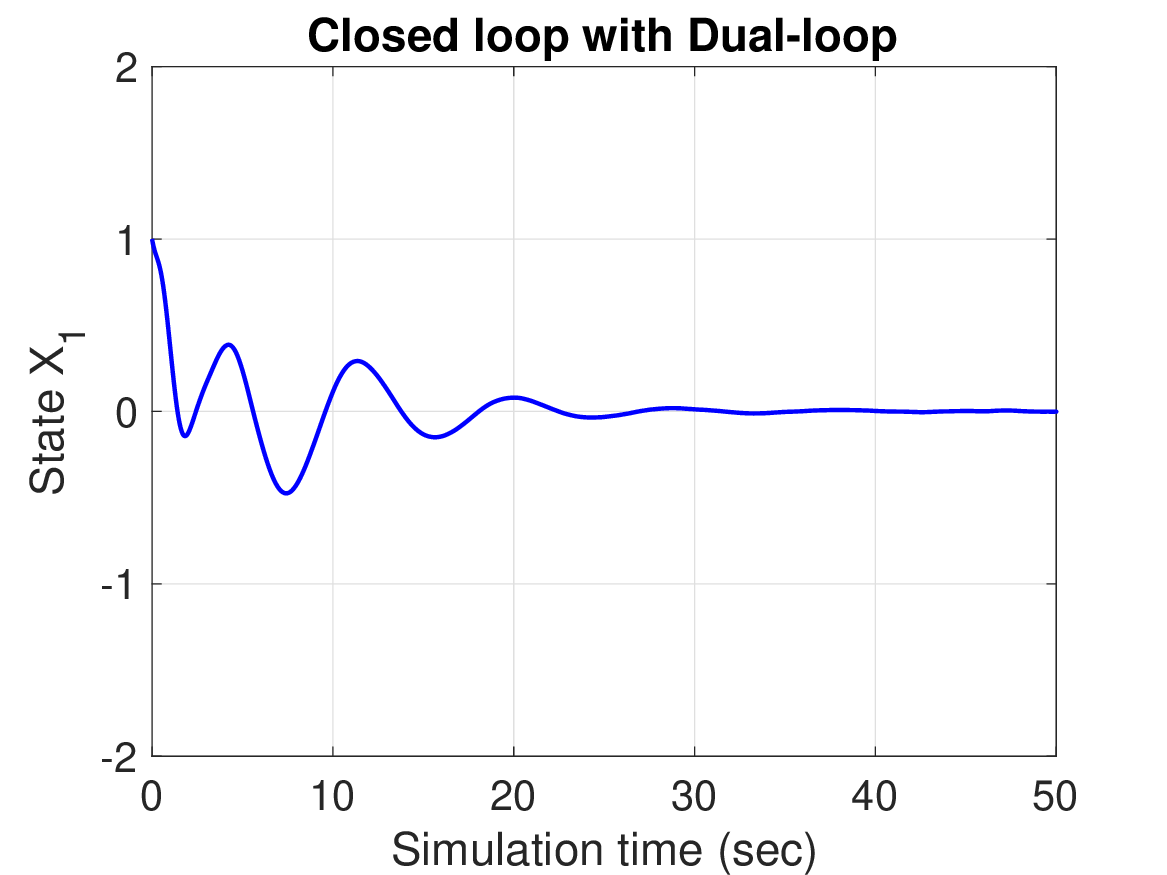}\label{figure2d}
\endminipage}
\subfloat[\scriptsize State $X_2$ - Dual loop]{\minipage{0.33\textwidth}
\includegraphics[width=\linewidth]{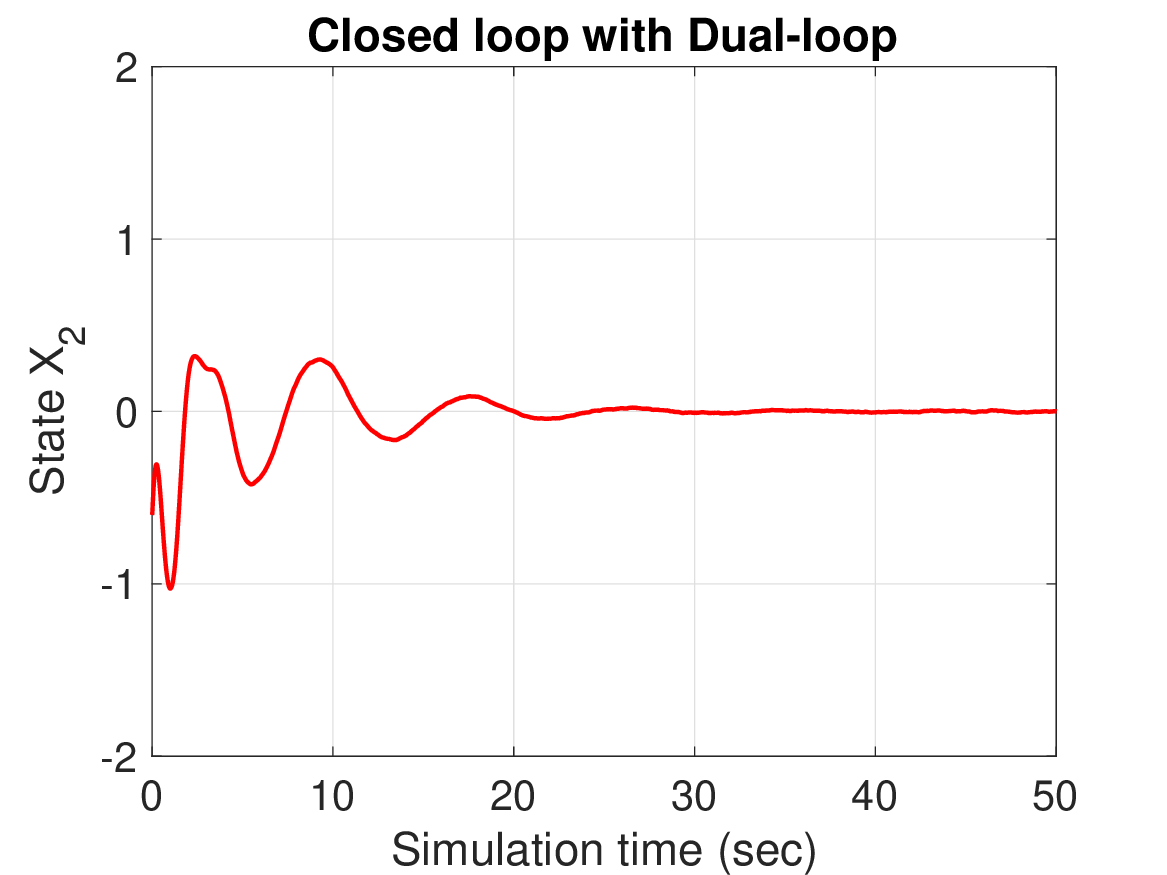}\label{figure2e}
\endminipage}
\subfloat[\scriptsize Control Input - Dual loop]{\minipage{0.33\textwidth}
\includegraphics[width=\linewidth]{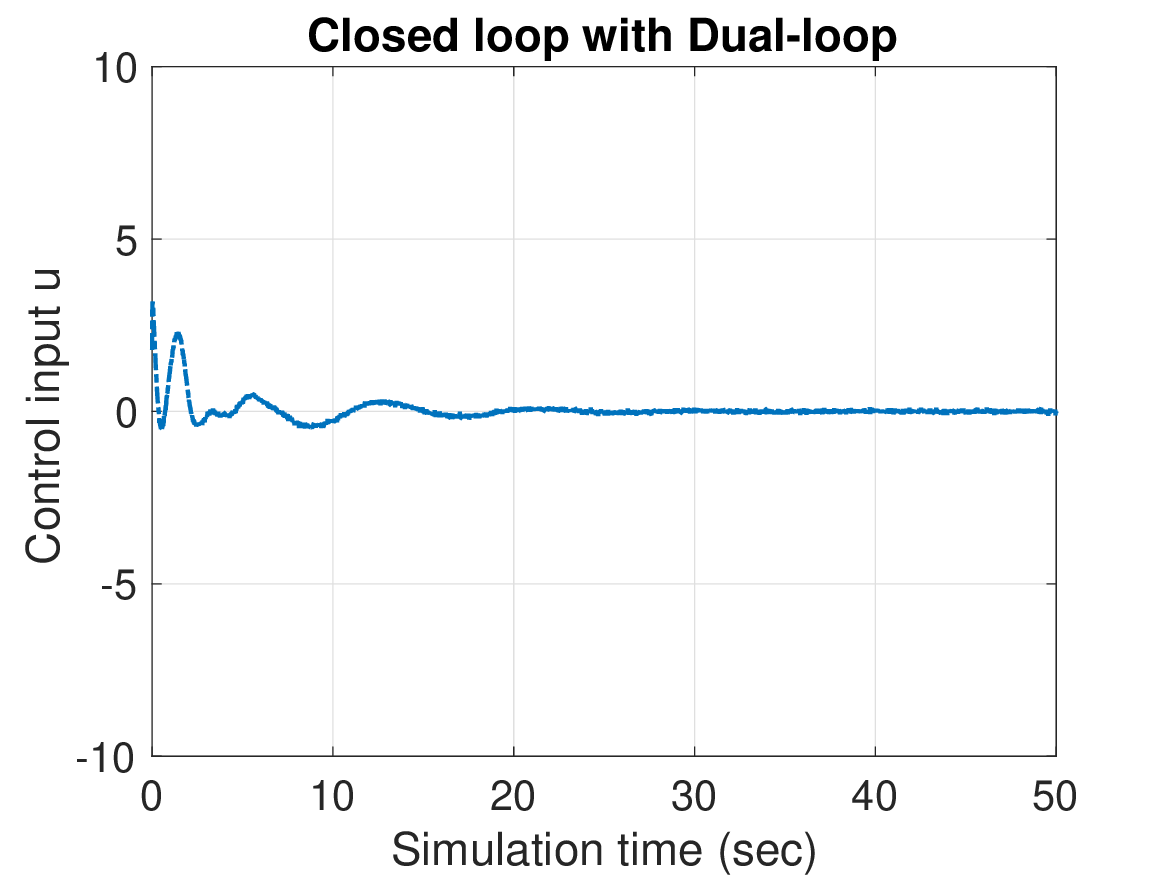}\label{figure2f}
\endminipage}
\caption{Closed-loop results with LQG control and dual-loop control at \textit{noise standard deviation} = 0.01}
\label{fig:LQG_DualLoop_n01}
\end{figure*}

Figure \ref{fig:DeltaY_DualLoop_n01} also shows how the output error $f = \hat{y} - y$ changes with the dual loop over time. Clearly, the dual loop manages to push the output error to zero, minimizing the model mismatch and leading to the desired convergence of the overall dynamics.

\begin{figure*}[h!]\centering
\subfloat[\scriptsize State $X_1$ - LQG]{\minipage{0.52\textwidth}
 \includegraphics[width=\linewidth]{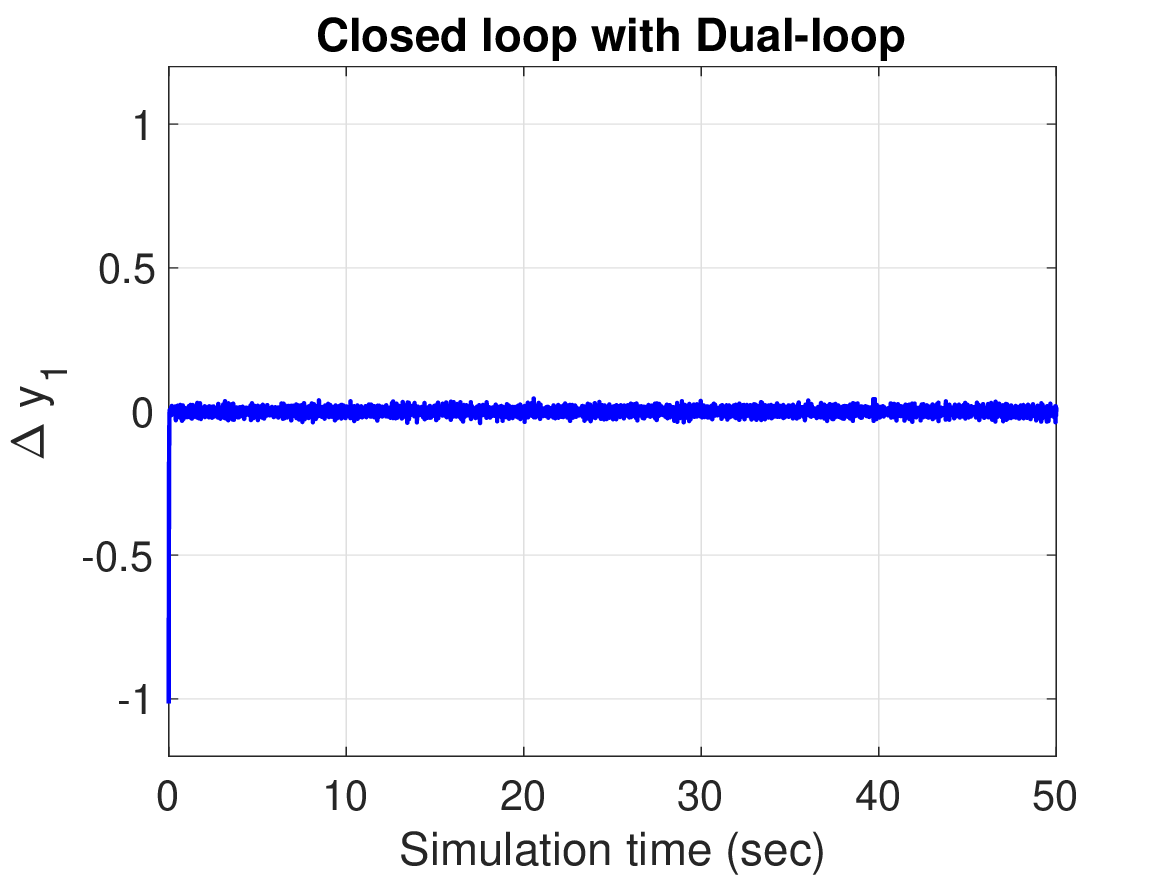}\label{figure2a}
\endminipage}
\subfloat[\scriptsize State $X_2$ - LQG]{\minipage{0.52\textwidth}
\includegraphics[width=\linewidth]{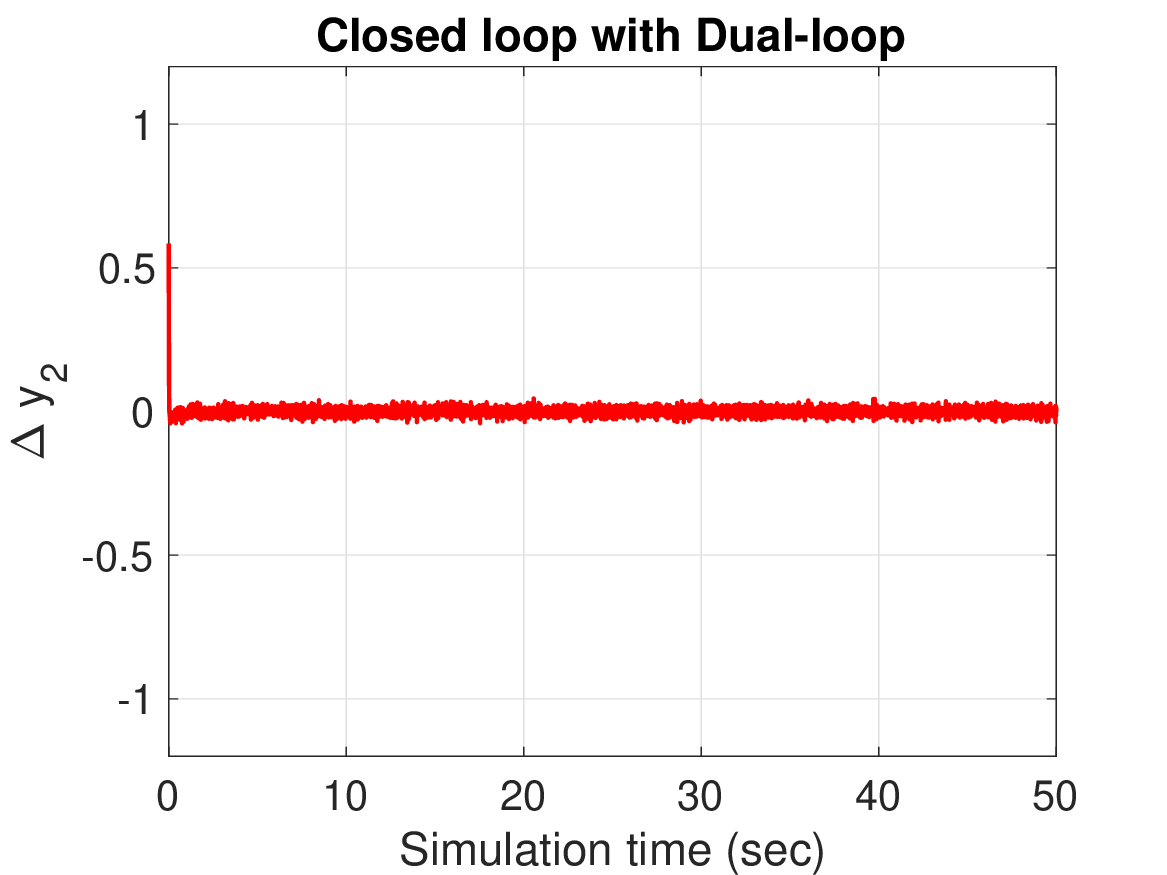}\label{figure2b}
\endminipage}
\caption{$\Delta$Y with dual-loop control at \textit{noise standard deviation} = 0.01}
\label{fig:DeltaY_DualLoop_n01}
\end{figure*}

\section{Conclusion}\label{section6}
The Koopman operator theory presents the opportunity to design control strategies for nonlinear complex dynamic systems using simple linear control theory. However, truncation error and identification biasedness due to measurement noise is a bigger roadblock to make it implementable for practical systems. The current work proposes to eliminate this gap by proposing a robust data-driven dual-loop control framework for nonlinear dynamical systems. The Koopman operator is used to identify high-dimensional approximate linear representation of the nonlinear dynamics, and the control strategy is developed based on the identified model. For learning the linear representation, Extended dynamical model decomposition (EDMD) is used. The study focused on the case of biased system identification due to measurement noise and how to design the control strategy to ensure stable and bounded system performance even with biased identified linear dynamics from Koopman operator theory. 

The paper first presents how measurement noise affects the identification process and then calculates the upper bound due to measurement noise. The identified system dynamics is then utilized to generate dual-loop control. The dual-loop framework contains Linear Quadratic Gaussian (LQG) as a stabilizing controller and $H_{\infty}$ control for guaranteeing performance robustness in the presence of model-mismatch. Based on truncation error due to approximation error and model biasedness due to measurement noise, a linear matrix inequality (LMI) condition is defined that ensures robust stability. To demonstrate the efficacy of the proposed approach, a simple Van Der Pol oscillator dynamics with control is considered. The results demonstrate that in the presence of both truncation error and model bias, a standard LQG control cannot regulate the system to the equilibrium point. However, our proposed dual-loop control robustly pushes the system to achieve equilibrium while minimizing output error between the estimated and actual system.

% if have a single appendix:
%\appendix[Proof of the Zonklar Equations]
% or
%\appendix  % for no appendix heading
% do not use \section anymore after \appendix, only \section*
% is possibly needed

% use appendices with more than one appendix
% then use \section to start each appendix
% you must declare a \section before using any
% \subsection or using \label (\appendices by itself
% starts a section numbered zero.)
%

% \appendices
% \section{Proof of the First Zonklar Equation}
% Appendix one text goes here.

% % you can choose not to have a title for an appendix
% % if you want by leaving the argument blank
% \section{}
% Appendix two text goes here.

% use section* for acknowledgment
% \section*{Acknowledgment}

% The authors would like to thank...

% Can use something like this to put references on a page
% by themselves when using endfloat and the captionsoff option.
\ifCLASSOPTIONcaptionsoff
  \newpage
\fi

% trigger a \newpage just before the given reference
% number - used to balance the columns on the last page
% adjust value as needed - may need to be readjusted if
% the document is modified later
%\IEEEtriggeratref{8}
% The "triggered" command can be changed if desired:
%\IEEEtriggercmd{\enlargethispage{-5in}}

% references section

% can use a bibliography generated by BibTeX as a .bbl file
% BibTeX documentation can be easily obtained at:
% http://mirror.ctan.org/biblio/bibtex/contrib/doc/
% The IEEEtran BibTeX style support page is at:
% http://www.michaelshell.org/tex/ieeetran/bibtex/
\bibliographystyle{IEEEtran}
% argument is your BibTeX string definitions and bibliography database(s)
\bibliography{IEEEabrv,bibtex/bib/References}

\end{document}